\algrenewcommand\algorithmicrequire{\textbf{Input:}}
\algrenewcommand\algorithmicensure{\textbf{Output:}}
\def\BState{\State\hskip-\ALG@thistlm}
\newtheorem{thm}{Theorem}[section]
\theoremstyle{definition}
\newtheorem{defn}{Definition}[section]
\theoremstyle{remark}
\newtheorem{rem}{Remark}[section]
\numberwithin{equation}{section}
\numberwithin{equation}{section}
\newtheorem{exam}{Example}[section]
\newcounter{saveeqn}
\title[A novel approach for geometrical body generation]{An inverse scattering approach for geometric body generation: a machine learning perspective }
\author{Jinghong Li}
\address{Faculty of Science, Qilu University of Technology, Jinan, Shandong, China}
\email{lijinhong@qlu.edu.cn}
\author{Hongyu Liu}
\address{Department of Mathematics, Hong Kong Baptist University, Kowloon, Hong Kong SAR}
\email{hongyu.liuip@gmail.com}
\author{Wing-Yan Tsui}
\address{Department of Mathematics, Hong Kong Baptist University, Kowloon, Hong Kong SAR}
\email{wytsui.yan@gmail.com}
\author{Xianchao Wang}
\address{Department of Mathematics, Harbin Institute of Technology, Harbin}
\email{xcwang90@gmail.com}
\date{} 
\begin{document}
\maketitle

\begin{abstract}

In this paper, we are concerned with the 2D and 3D geometric shape generation by prescribing a set of characteristic values of a specific geometric body. One of the major motivations of our study is the 3D human body generation in various applications. We develop a novel method that can generate the desired body with customized characteristic values. The proposed method follows a machine-learning flavour that generates the inferred geometric body with the input characteristic parameters from a training dataset. The training dataset consists of some preprocessed body shapes associated with appropriately sampled characteristic parameters. One of the critical ingredients and novelties of our method is the borrowing of inverse scattering techniques in the theory of wave propagation to the body generation. This is done by establishing a delicate one-to-one correspondence between a geometric body and the far-field pattern of a source scattering problem governed by the Helmholtz system. It in turn enables us to establish a one-to-one correspondence between the geometric body space and the function space defined by the far-field patterns. Hence, the far-field patterns can act as the shape generators. The shape generation with prescribed characteristic parameters is achieved by first manipulating the shape generators and then reconstructing the corresponding geometric body from the obtained shape generator by a stable multiple-frequency Fourier method. The proposed method is in sharp difference from the existing methodologies in the literature, which usually treat the human body as a suitable Riemannian manifold and the generation is based on non-Euclidean approximation and interpolation. Our method is easy to implement and produces more efficient and stable body generations. We provide both theoretical analysis and extensive numerical experiments for the proposed method. The study is the first attempt to introduce inverse scattering approaches in combination with machine learning to the geometric body generation and it opens up many opportunities for further developments.

\medskip

\noindent{\bf Keywords:}~~Geometric body generation; machine learning; shape generator; inverse source scattering

\noindent{\bf 2010 Mathematics Subject Classification:}~~68T05, 68Q32, 91E40, 35J05, 35R30

\end{abstract}

\section{Introduction}


With the rapid technological advancement today, the access to realistic 3D human shapes is of great importance in both computer vision and graphics, and has various applications in different industries including virtual game design, film making, bioinformatics\cite{BioDigital}, healthcare\cite{treleaven20073d}, and especially, those related to garment design. Some applications involve fitting predictions, virtual try-on simulations\cite{guo2012clothed,chen2011practical,zhou2010parametric,tong2012scanning} or size recommendations\cite{ashdown2004using,apeagyei2010application,loker2005size}, that help to recommend relevant clothing which would fit specific occasions or fashion trends for online customers. Such applications require a critical ingredient on digital transformation from humans bodies to digital 3D shapes, such that the shapes maintain some of the main features from human bodies.

The traditional approaches to access reliable digital information of a human body are through laser range scanners\cite{levoy2000digital}, stereo reconstruction\cite{seitz2006comparison,esteban2004silhouette,yang2007reconstruction} or structured light methods for 3D sensing\cite{geng2011structured,chen2000overview,dunn1989measuring}. However, considering the cost of data storage, network transmission and expensive scanning equipment, it is rather unpractical to scan individuals for each application. Hence many studies have been done to generate 3D human shapes based on partial input information. These prior systems can be mainly classified into three types: marker-based systems, silhouette-based systems  and  measurement-based systems. Marker-based system estimates dynamic 3D human body shapes by capturing a sparse set of marker positions. These techneqiues proceed by using a single static scan or multiple scans and a marker motion capture sequence of the person\cite{anguelov2005scape}. For the static case, silhouette-based system estimates human body shapes based on a set of input images by fitting the silhouette in each view\cite{balan2007detailed,chen2011single,guan2009estimating,mundermann2007accurately}. Some apporaches also combine with machine learning that build a correlation between a training dataset of 3D body shapes and a set of 2D images, and then predict a shape based on the correlation\cite{chen2009learning}.

Although marker-based systems and silhouette-based systems could yield satisfactory reconstructions on 3D human body shapes under tight dresses or naked human shapes, most of the schemes are so computationally expensive and the results are easily affected if heavy or loose clothes are worn. To overcome these difficulties, a great deal of efforts have been devoted to the investigation of simple and fast measurement-based systems \cite{seo2003automatic,seo2003synthesizing,kasap2007parameterized}. Typically, one considers the landmarks or circumferences from the human structures at specific locations as characteristic values. Since such characteristic values are linear or curvilinear, they are relatively invariant to articulation changes than those silhouettes measurements. If the set of characteristic values is well selected, one can achieve meaningful estimation for both global and local body shapes. Kart et al.(2011) built a system which only requires to input some personal information, such as weight, height and age as well as a 2D photograph. The decision algorithm then determines the human shape according to the measurements and the body mass index (BMI)\cite{kart2012web}.  Seo et al. \cite{seo2003automatic} presented a human body generation apporach by taking the anthropometric measurements, e.g. stature, crotch length, arm length, neck girth, chest/bust girth, underbust girth, waist girth and hip girth as input. They derived the relationship between the input characteristic values and the preprocessing database of 3D scanned data of human body models by using radial basis interpolation. At run-time, the system generates new human body shapes from the user input characteristic values by fitting the template model onto each scanned data.

In this paper, we develop a completely novel methodology for the geometric body generation, which fulfils the following two basic requirements: (i) the geometric body generation is automatically determined by the input characteristic sets; (ii) the predicted geometric shape fits for all input characteristic values and moreover it can well approximate the exact geometric body possessing the aforesaid characteristic values. The proposed method follows a machine-learning flavour that generates the inferred geometric body with the customized characteristic parameters from a training dataset. The training dataset consists of some preprocessed body shapes associated with appropriately sampled characteristic parameters. One of the critical ingredients and novelties of our method is the borrowing of inverse scattering techniques in the theory of wave propagation to the body generation. This is done by establishing a delicate one-to-one correspondence between a geometric body and the far-field pattern of a source scattering problem governed by the Helmholtz system. It in turn enables us to establish a one-to-one correspondence between the geometric body space and the function space defined by the far-field patterns. Hence, the far-field patterns can act as the shape generators. The shape generation with prescribed characteristic parameters is achieved by first manipulating the shape generators in the function space and then reconstructing the corresponding geometric body from the obtained shape generator by a stable multiple-frequency Fourier method. The proposed method is in sharp difference from the existing methodologies in the literature, which usually treat the human body as a suitable Riemannian manifold and the generation is based on non-Euclidean approximation and interpolation. In fact, in all of the literature mentioned earlier on manifold learning of body generation, one typically uses Principal Component Analysis (PCA) or Principal Geodesic Analysis (PGA). {PCA and PGA are used for optimal reduction of the data and thus efficient deformation by computing statistics on Euclidean manifolds or non-Euclidean manifolds can be achieved; see \cite{seo2003automatic,freifeld2012lie} and the references therein for more relevant discussion. In our new approach, the shape generator enables us to train the learning dataset via the algebraic operations in the shape space directly without dealing with the deformation of the manifold meshes between geometric shapes.}

The rest of the paper is organized as follows. In section 2, we provide rigorous mathematical formulations of characteristic values and shape space. Section 3 introduces the notion of shape generator via the inverse source scattering associated with the Helmholtz system. In Section 4, we present the mathematical setup of the geometric body generation from a machine-learning perspective. Section 5 is devoted to the development of the new method for the shape generation. In Section 6, we present several two- and three-dimensional numerical examples to show the effectiveness and efficiency of our method. The paper is concluded in Section 7 with some relevant discussion.

\section{Preliminary knowledge on shape manifold theory}

In this section, we present some preliminary knowledge on the shape manifold theory that shall be needed in our subsequent study of body generation. Generally speaking, a geometric shape or a geometric body is a topological $n$-manifold, $n\in\mathbb{N}$, equipped with certain shape descriptors, which give the full information to describe the geometric shape. We call such shape descriptors as {\it characteristic values}. We have the following formal definition.

\begin{defn}\label{char values}
Let $D$ be a topological $n$-manifold with $n\in\mathbb{N}$. Let $\Lambda_D:=\{\lambda^{(j)}\}_{j\in\mathscr{C}}$ be a set of parameters associated with $D$ that are invariant with respect to isometric deformations and are independent to the parametrizations of $D$. Here, the cardinality $\mathscr{C}$ might be finite or infinite. $\Lambda_D$ is said to be a characteristic set of $D$ if it uniquely determines $D$. $D$ and its characteristic set $\Lambda_D$, written as the  $(D, \Lambda_D)$ is referred to as a geometric shape or a geometric body.
\end{defn}

Clearly, Definition~\ref{char values} includes much general geometric objects. However, for the present study, we are mainly concerned with the case that $D$ can be embedded into $\mathbb{R}^d$, $d=2,3$, as a bounded domain. That means, we exclude some interesting cases such as $D$ is a Riemannian surface with boundary in $\mathbb{R}^3$. Nevertheless, our study is general enough to include the human body as a specific case.

In Definition~\ref{char values}, the set of characteristic values is typically a set of measurements which gives a systematic characterization of the size, shape and composition of a geometric object for us to determine the shape of the object. For example, when considering a rectangular object, once can introduce a set of characteristic values containing its height, width and length, which provide all details to determine a unique rectangular shape. Expanding the same idea to human body shapes, one could also use characteristic sets to represent them.  There are many different ways to represent a human shape. We would try to group those characteristic values into four main catagories, including Eucidean distance, geodesic distance, circumference and ratio. The Eucidean or geodisic distance is linear or curvilinear distance between two points on the human model, such as stature, crotch length, arm length, shoulder breadth etc.. The circumference can be computed by the horizontal girth of the body, such as neck girth, chest/bust girth, under-bust girth, waist girth, hip girth, etc.. The ratio can be information of weight, Body Mass Index, muscle and fat rate. In spite of the above characteristic values, one can also consider some pure measurements such as age or gender as characteristic values.

The full set of characteristic values gives the complete information of a geometric shape without lossing any information. It is easy to imagine that the cardinality of a set of characteristic values depends on the complexity of a shape. Hence, the number of characteristic values required can be considered as the dimensionality of the geometric shape. For those complicated objects, like human shapes, it may require infinite set of characteristic values for accurate formulations. Due to practical reasons, one can consider a reasonable truncation of an infinite characteristic set into a finite one for a complicated geometric shape. In doing so, we can consider our study in the following product space
\begin{equation}\label{eq:product1}
\mathcal{S}:=\mathscr{D}\times \mathscr{V},
\end{equation}
where $\mathscr{D}$ is composed of all the bounded domains in $\mathbb{R}^d$ and $\mathscr{V}$ is an $M$-dimensional vector space containing the characteristic values. In fact, in the present study, the characteristic values are usually real numbers and one can take $\mathscr{V}=\mathbb{R}^{M}$ with $M\in\mathbb{N}$. $\mathcal{S}$ is referred to as the geometric shape space. According to the (approximate) one-to-one correspondence between a geometric shape and its characteristic values in Definition~\ref{char values}, we readily see that all the shape information can be obtained by a single point of this $M$-dimensional vector space $\mathscr{V}$. By adjusting the characteristic values, we can obtain new geometric shapes and this is a key ingredient in our human body interpolation.

\section{Shape generators via inverse source scattering}\label{shape generator}

In the previous section, we introduce the important notion of shape space for our study. We proceed to introduce another critical ingredient, {\it shape generator}, for our subsequent study of the geometric body generation. In fact, the generation of a new geometric shape shall be based on algebraic interpolation of exemplar models from the shape space. If the algebraic operations are to be conducted directly in the shape space, dealing with geometric deformations of manifolds, one would certainly encounter very complicated and tedious calculations and manipulations because of the lack of global parametrizations for the non-Eucidean shapes involved. The shape generator can overcome this challenge by bridging the geometric shape space and the function space. To that end, we next introduce the inverse scattering problem in finding an active source from its generated far-field pattern.

Let $f:\mathbb{R}^d\mapsto\mathbb{C}$ be a function having a compact support, $f =
\chi_D \varphi$, where $D\subset\mathbb{R}^d$ is a bounded
domain and $\varphi\in
L^\infty(\mathbb{R}^d)$. The set $D$ is the external \emph{shape} of $f$
while $\varphi$ describes the \emph{intensity} of the source at
various points in $D$. We assume that $\varphi$ and $D$ do
not depend on the wavenumber $k\in\mathbb{R}_+$. In other words we are
considering monochomatic scattering. The source $f$ produces a
scattered wave $u\in H^2_{loc}(\mathbb{R}^d)$ given by the unique solution to
\begin{equation} \label{sourceScattering}
  (\Delta+k^2) u = f, \qquad \lim_{r\rightarrow\infty} r^\frac{d-1}{2}
  \big(\partial_r - ik \big) u = 0,
\end{equation}
where $r=|x|$ for $x\in\mathbb{R}^d$. The limit in \eqref{sourceScattering} is known as the Sommerfeld radiation condition which characterizes the outgoing nature of the radiating wave. By the limiting absorption principle (cf. \cite{Eskin}), the solution to \eqref{sourceScattering} can be computed as follows,
\begin{equation}\label{eq:source1}
\begin{split}
u=(\Delta+k^2)^{-1} f & =\lim_{\varepsilon\rightarrow +0} \big(\Delta+(k-i\varepsilon)^2 \big)^{-1}f\\
&=-\lim_{\varepsilon\rightarrow+0}\int_{\mathbb{R}^d} \frac{e^{i x\cdot\xi}\widehat f(\xi)}{|\xi|^2-(k-i\varepsilon)^2} \ d\xi,
\end{split}
\end{equation}
where
\begin{equation}\label{eq:fourier}
\widehat f(\xi):=\mathcal{F}f(\xi)=(2\pi)^{-d}\int_{\mathbb{R}^n} f(x) e^{-i\xi\cdot x}\ dx
\end{equation}
signifies the Fourier transform of $f$. Inverting the Fourier transform in \eqref{eq:source1}, one has the following integral representation,
\begin{equation}\label{eq:source2}
u=(\Delta+k^2)^{-1} f:=-\frac i 4 \left(\frac{k}{2\pi}\right)^{\frac{d-2}{2}}\int_{\mathbb{R}^d} |x-y|^{\frac{2-d}{2}} H_{\frac{d-2}{2}}^{(1)}(k|x-y|) f(y)\ dy,
\end{equation}
where $H_{(d-2)/2}^{(1)}$ is the first-kind Hankel function of order $(d-2)/2$. Stationary phase applied to \eqref{eq:source2} yields that
\begin{equation}\label{eq:source3}
u(x)=\frac{e^{ik|x|}}{|x|^{(d-1)/2}} C_{d,k}\int_{\mathbb{R}^d} e^{-ik\hat{x}\cdot y} f(y)\ dy+ \mathcal O(|{x}|^{\frac d 2}),\quad |x|\rightarrow \infty,
\end{equation}
where $\hat x:=x/|x|\in\mathbb{S}^{d-1}$, $x\in\mathbb{R}^d\backslash\{0\}$, and
\[
C_{d,k} = \frac{-i}{\sqrt{8\pi}}\left(\frac{k}{2\pi}
\right)^{\frac{d-2}{2}} e^{-\frac{(d-1)\pi}{4}i}.
\]
The \emph{far-field pattern} of $u$ is given by
\begin{equation}\label{eq:sss1}
  u_\infty(\hat x, k; f):= C_{d,k}\int_{\mathbb{R}^d} e^{-ik\hat{x}\cdot y}
  f(y)\ dy=(2\pi)^d C_{d,k} \mathcal{F}f(k\hat x)\in L^2(\mathbb{S}^{d-1}).
\end{equation}
It is obvious that $u_\infty$ is (real) analytic in both $\hat x$ and $k$. Hence, if $u_\infty(\hat x, k)$ is known on any open portion of $\mathbb{S}^{d-1}\times\mathbb{R}_+$, then it is known on the whole set by analytic continuation.

The inverse source scattering problem is concerned with the recovery of $f=\varphi\chi_D$ by knowledge of $u_\infty(\hat x, k; f)$ for $(\hat x, k)\in\Sigma$, where $\Sigma$ is an open subset of $\mathbb{S}^{d-1}\times\mathbb{R}_+$. According to our discussion above, without loss of generality, we always assume that $\Sigma=\mathbb{S}^{d-1}\times\mathbb{R}_+$ in what follows. The inverse source problem arises in a variety of important applications including detection of hazardous chemicals, medical imaging, photoacoustic and thermoacoustic tomography, brain imaging, artificial intelligence in gesture computing and others. We refer to two recent articles \cite{wang2017fourier,wang2018fourier} by two of the authors of this article for some recent developments on the inverse source problem.

Next, let us consider a specific case by assuming a source supported in a domain $D$ with a constant density $1$. Then clearly by \eqref{eq:sss1}, there is a one-to-one correspondence between $D$ and $u_\infty(\Sigma; D):=\{u_\infty(\hat x, k; 1\cdot\chi_D)\}_{(\hat x, k)\in\Sigma}\in L^2(\mathbb{S}^{d-1}\times\mathbb{R}_+)$ in the sense that for two domains $D_1$ and $D_2$,
\begin{equation}\label{eq:11d}
u_\infty(\Sigma; D_1)=u_\infty(\Sigma; D_2)\quad\mbox{if and only if}\quad D_1=D_2.
\end{equation}
Based on \eqref{eq:11d}, we next introduce
\begin{defn}\label{defn:generator}
For a geometric shape $(D, \Lambda_D)\in\mathcal{S}$,
\[
u_\infty(\Sigma; D):=\{u_\infty(\hat x, k; 1\cdot\chi_D)\}_{(\hat x, k)\in\Sigma}\in L^2(\mathbb{S}^{d-1}\times\mathbb{R}_+)
\]
defined via the Helmholtz system \eqref{sourceScattering} is called a {\it shape generator} for $D$.
\end{defn}

\begin{rem}
By Definition~\ref{defn:generator}, a geometric body $D$ can be completely determined by a shape generator $u_\infty(\Sigma)$. Since $u_\infty(\Sigma)$ is from a function space, this paves the way for the new body generation through function interpolations.
\end{rem}

\begin{rem}
By \eqref{eq:sss1}, we know the far-field pattern is actually the Fourier transform of the source density up a dimensional constant. However, introducing the shape generator via the inverse scattering approach shall provide more physical insights in our study, and moreover it enables us to borrow ideas from the inverse scattering literature of recovering the geometric shape $D$ from the associated far-field pattern. This also paves the way of extending the idea by using other inverse scattering models that have such one-to-one correspondence between geometric shapes and far-field patterns; see more relevant discussion in Section 7.
\end{rem}

\section{Mathematical setup for the geometric body generation}

In this section, we introduce the mathematical formulation of the geometric body generation for our study from a machine learning perspective. For a geometric shape $(D, \Lambda_{D})$ with the associated shape generator $u_\infty(\Sigma;D)$, the pair of the high dimensional variables, written as
$\{(\Lambda_{D},u_\infty(\Sigma;D))\}$, is referred to as an {\it input-output pair}. Let $\{(\Lambda_{D_i},u_\infty(\Sigma;D_i))\}_{i\in \mathscr{N}}$ with $\mathscr{N}=\{0,1,\dots,N_{pair}\}$ be a set of input-output pairs associated with the characteristic sets $\{\Lambda_{D_i}\}_{i\in \mathscr{N}}$. Here the input characteristic sets $\Lambda_{D_i}$ are introduced as
\begin{equation}\label{eq:a1}
\Lambda_{D_i}:=\{\lambda^{(j)}_{i}\}_{j\in\mathscr{C},i\in \mathscr{N}},
\end{equation}
with $\mathscr{C}=\{1,\dots,M\}$. In \eqref{eq:a1}, the notation $\lambda^{(j)}_{i}$ represents the characteristic value of the $i$-th geometric shape $D_i$ in the $j$-th direction of its characteristic set. Here, the cardinality $\mathscr{N}$ is finite. The {\it training dataset} of the geometric body generation is introduced to be
	\begin{equation}\label{eq:training dataset}
	\mathbf{Z}:=\{(\Lambda_{D_i},u_\infty(\Sigma;D_i))\}_{i\in \mathscr{N}}
	\end{equation}
	with
	\begin{equation}\label{variable}
	\Lambda=(\lambda^{(1)},\dots,\lambda^{(M)})\in \mathbb{R}^M \text{ and } u_\infty(\Sigma)\in L^2(\mathbb{S}^{d-1}\times\mathbb{R}_+).
	\end{equation}
The training dataset consists of certain pre-sampled geometric shapes with statistically well selected characteristic values. The corresponding shape generator of a specific body in the training dataset can also be pre-calculated and stored. The main goal of our study is to first infer a learning model from the training dataset, $T_\mathbf{Z}: \mathbb{R}^M\rightarrow L^2(\mathbb{S}^{d-1}\times\mathbb{R}_+)$ that fulfils the following requirements:
\begin{enumerate}
\item It fits the training data well in the sense that
\begin{equation}\label{eq:traininng model}
T_\mathbf{Z}(\Lambda_{D_i}):=\hat{u}_\infty(\Sigma;D_i)\approx u_\infty(\Sigma;D_i),\quad \forall D_i\in\mathbf{Z}.
\end{equation}

\item It can be used to infer the shape generator for a given new shape with prescribed characteristic values, namely,
\begin{equation}\label{eq:step1}
\hat{u}_\infty(\Sigma; D_{\mathrm{new}}):=T_\mathbf{Z}(\Lambda_{D_\mathrm{new}}),
\end{equation}
and with a statically well selected training dataset, it is justifiable to expect that
\begin{equation}\label{step2}
\hat{u}_\infty(\Sigma; D_{\mathrm{new}})\approx u_\infty(\Sigma; D_{\mathrm{new}}),
\end{equation}
where $u_\infty(\Sigma; D_{\mathrm{new}})$ is the shape generator for $D_{\mathrm{new}}$.
\end{enumerate}
If a learning model can be achieved that fulfils the two requirements as described above, then the body generation can be proceeded as follows. For a given new set of characteristic values, one first generates the learned shape generator as in \eqref{eq:step1}. By a certain inverse scattering approach, one can then reconstruct the (approximate) shape $D_{\mathrm{new}}$ from the corresponding shape generator $\hat{u}_\infty(\Sigma; D_{\mathrm{new}})$. In the next section, we shall develop the two critical ingredients in the body generation procedure described above, namely, the learning model and the reconstruction method. To be more definite and specific, we first introduce the following definition from a machine learning perspective.
%
%

\begin{defn}\label{Main problem}\title{(\textbf{Body Learning Model})}
Given a {\it training dataset}
\begin{equation}\label{eq:set1}
\mathbf{Z}:=\{(\Lambda_{D_i},u_\infty(\Sigma;D_i))\}_{i\in \mathscr{N}}.
\end{equation}
Let $\mathcal{H}$ be a compact subset of $L^2(\mathbb{S}^{d-1}\times\mathbb{R}_+)$. $T_\mathbf{Z}\in\mathcal{H}$ (with specified coefficients $C$) is said to be the best fit {\it learning model} associated with the training dataset $\mathbf{Z}$ if it is the minimizer of the following optimization problem,
\begin{equation}\label{eq:error function}
\min_{T_\mathbf{Z}\in \mathcal{H}} \frac{1}{N_{pair}+1}\sum_{i\in\mathscr{N}}\left\| T_\mathbf{Z}(\Lambda_{D_i})-u_\infty(\Sigma; D_i)\right\|^2_{\mathcal{H}}.
\end{equation}
\end{defn}


According to Definition~\ref{Main problem}, the choice of the learning subspace $\mathcal{H}$ plays a critical role. However, we note that the shape generator is actually (real) analytic in all of its arguments. Hence, instead of solving the computationally costly optimization problem \eqref{eq:error function}, we can make use of the functional interpolation to produce a well-rounded shape learning model. This is one of the main advantages of introducing the shape generator through the inverse scattering model. In the next section, for a given training dataset as in \eqref{eq:set1}, we shall derive a learning model using the cubic B-spline interpolation through the use of the high-dimensional data-points \eqref{eq:traininng model}. For the reconstruction of the approximate body shape from the shape generator obtained through the learning model, we shall make use of a multiple-frequency Fourier method, and it can also produce an efficient and stable recovery. Throughout, we assume that the characteristic values in the training dataset is statistically well selected and it is not the focus of the present article.

\section{A scheme for geometric body generation}\label{Appaoach}

In this section, we develop the details of our scheme for the geometric body generation following the general discussion made in the previous section. We first derive the learning model through the functional interpolation of the high-dimensional data in the training dataset. To that end, we present some preliminary knowledge on the cubic B-spline, and we also refer to \cite{hou1978cubic,de1972calculating,schoenberg1973cardinal,schoenberg1946contributions} for more relevant discussion on the cubic B-spline.

\subsection{Preliminary knowledge on the cubic B-spline}\label{Non-uniform B-spline}


Consider the training dataset \eqref{eq:training dataset}. Let the $M$ sets of unique {\it grids} in the {directions} of  $\{\lambda^{(j)}\}_{j\in \mathscr{C}}$
\begin{equation}\label{eq:grids}
\Delta_j=\{\lambda_0^{(j)},\cdots,\lambda_{k_j}^{(j)}\}_{j\in \mathscr{C}},\quad k_j\in \mathbb{N},
\end{equation}
define on the intervals $[a_j,b_j]$ as $M$ sets of points $\lambda_{g_j}^{(j)}\in[a_j,b_j]\subset \mathbb{R}$, where $g_j\in \{0,\dots,k_j\}
$ and $a_j=\lambda_{0}^{(j)}< \lambda_{1}^{(j)}< \dots < \lambda_{k_j}^{(j)}=b_j,j=1,\dots,M$. Here, $k_j$ is the greatest number of distinct characteristic values in the $j$-th direction of the characteristic set. We remark that if the characteristic values of the training dataset are all collected in distinct values, then $k_j$ is actually the last index of the training dataset, $N_{pair}$. However, the training dataset might be collected in such a way that some body shapes may possess the same characteristic value in the $j$-th direction, and hence $k_j$ is usually smaller than $N_{pair}$.

With the above notation, the training dataset stored as the array in \eqref{eq:training dataset} can be represented as elements on the grid mesh corresponding to the characteristic numbers $\lambda^{(j)}$ as described above. The interpolation data are the corresponding shape generators and are written as
\begin{equation}\label{eq:interpolation data}
u_\infty(\Sigma;D_i):=U_{g_1,g_2,\dots,g_M}\in L^2(\mathbb{S}^{d-1}\times\mathbb{R}_+),\quad \forall i\in \mathscr{N},
\end{equation}
where $g_j=0,\dots,k_j,j=1,\dots,M.$ In the subsequent study, we shall stick to the same notation $g_1,g_2,\dots,g_M$ to represent the linear indexing. The following example demonstrates a real application for the human body generation.

\begin{exam}\label{exam:example of grids}
	The training dataset consists of 20 bodies with two characteristic values as consideration, say, height and relative weight. Here, the height and relative weight are the two directions of the grids,  $\lambda^{(height)}$ and $\lambda^{(weight)}$. Suppose the heights of the sampled bodies are given by 1.5m, 1.6m, 1.7m, 1.8m, 1.9m and the relative weights of the sampled bodies are given by 60\%, 80\%, 100\%, 120\%. Then the first grid $\Delta_{height}=\{\lambda_0^{(height)},\lambda_1^{(height)},\lambda_2^{(height)},\lambda_3^{(height)},\lambda_4^{(height)}\}=\{1.5,1.6,1.7,1.8,1.9\}$ and the second grid $\Delta_{weight}=\{\lambda_0^{(weight)},\lambda_1^{(weight)},\lambda_2^{(weight)},\lambda_3^{(weight)}\}=\{0.6, 0.8, 1, 1.2\}$. The interpolation data are actually stored as listed in Table~1; e.g. $U_{0,0}=u_\infty(\Sigma;D_1),U_{0,3}=u_\infty(\Sigma;D_4),U_{1,3}=u_\infty(\Sigma;D_{8})$.
\end{exam}

\begin{center}
\begin{table}[]\small
	\hspace*{-.3cm}\begin{tabular}{cc|c|c|c|c|}
		\cline{3-6}
		\multicolumn{1}{l}{}                                                                           & \multicolumn{1}{l|}{}  & \multicolumn{4}{c|}{$2^{nd}$ grid}                                                                                                                                                                                                                                                                                                                                                                                                                  \\ \cline{3-6}
		\multicolumn{1}{l}{}                                                                           &                        & $\lambda_0^{(weight)}$                                                                                      & $\lambda_1^{(weight)}$                                                                                      & $\lambda_2^{(weight)}$                                                                                    & $\lambda_3^{(weight)}$                                                                                      \\ \hline
		\multicolumn{1}{|c|}{\multirow{5}{*}{\begin{tabular}[c]{@{}c@{}}$1^{st}$\\ grid\end{tabular}}} & $\lambda_0^{(height)}$ & \begin{tabular}[c]{@{}c@{}}$(\Lambda_{D_1},u_\infty(\Sigma,D_1))$\\ $=(1.5,0.6,U_{0,0})$\end{tabular}       & \begin{tabular}[c]{@{}c@{}}$(\Lambda_{D_2},u_\infty(\Sigma,D_2))$\\ $=(1.5,0.8,U_{0,1})$\end{tabular}       & \begin{tabular}[c]{@{}c@{}}$(\Lambda_{D_3},u_\infty(\Sigma,D_3))$\\ $=(1.5,1,U_{0,2})$\end{tabular}       & \begin{tabular}[c]{@{}c@{}}$(\Lambda_{D_4},u_\infty(\Sigma,D_4))$\\ $=(1.5,1.2,U_{0,3})$\end{tabular}       \\ \cline{2-6}
		\multicolumn{1}{|c|}{}                                                                         & $\lambda_1^{(height)}$ & \begin{tabular}[c]{@{}c@{}}$(\Lambda_{D_5},u_\infty(\Sigma,D_5))$\\ $=(1.6,0.6,U_{1,0})$\end{tabular}       & \begin{tabular}[c]{@{}c@{}}$(\Lambda_{D_6},u_\infty(\Sigma,D_6))$\\ $=(1.6,0.8,U_{1,1})$\end{tabular}       & \begin{tabular}[c]{@{}c@{}}$(\Lambda_{D_7},u_\infty(\Sigma,D_7))$\\ $=(1.6,1,U_{1,2})$\end{tabular}       & \begin{tabular}[c]{@{}c@{}}$(\Lambda_{D_8},u_\infty(\Sigma,D_8))$\\ $=(1.6,1.2,U_{1,3})$\end{tabular}       \\ \cline{2-6}
		\multicolumn{1}{|c|}{}                                                                         & $\lambda_2^{(height)}$ & \begin{tabular}[c]{@{}c@{}}$(\Lambda_{D_9},u_\infty(\Sigma,D_9))$\\ $=(1.7,0.6,U_{2,0})$\end{tabular}       & \begin{tabular}[c]{@{}c@{}}$(\Lambda_{D_{10}},u_\infty(\Sigma,D_{10}))$\\ $=(1.7,0.8,U_{2,1})$\end{tabular} & \begin{tabular}[c]{@{}c@{}}$(\Lambda_{D_{11}},u_\infty(\Sigma,D_{11}))$\\ $=(1.7,1,U_{2,2})$\end{tabular} & \begin{tabular}[c]{@{}c@{}}$(\Lambda_{D_{12}},u_\infty(\Sigma,D_{12}))$\\ $=(1.7,1.2,U_{2,3})$\end{tabular} \\ \cline{2-6}
		\multicolumn{1}{|c|}{}                                                                         & $\lambda_3^{(height)}$ & \begin{tabular}[c]{@{}c@{}}$(\Lambda_{D_{13}},u_\infty(\Sigma,D_{13}))$\\ $=(1.8,0.6,U_{3,0})$\end{tabular} & \begin{tabular}[c]{@{}c@{}}$(\Lambda_{D_{14}},u_\infty(\Sigma,D_{14}))$\\ $=(1.8,0.8,U_{3,1})$\end{tabular} & \begin{tabular}[c]{@{}c@{}}$(\Lambda_{D_{15}},u_\infty(\Sigma,D_{15}))$\\ $=(1.8,1,U_{3,2})$\end{tabular} & \begin{tabular}[c]{@{}c@{}}$(\Lambda_{D_{16}},u_\infty(\Sigma,D_{16}))$\\ $=(1.8,1.2,U_{3,3})$\end{tabular} \\ \cline{2-6}
		\multicolumn{1}{|c|}{}                                                                         & $\lambda_4^{(height)}$ & \begin{tabular}[c]{@{}c@{}}$(\Lambda_{D_{17}},u_\infty(\Sigma,D_{17}))$\\ $=(1.9,0.6,U_{4,0})$\end{tabular} & \begin{tabular}[c]{@{}c@{}}$(\Lambda_{D_{18}},u_\infty(\Sigma,D_{18}))$\\ $=(1.9,0.8,U_{4,1})$\end{tabular} & \begin{tabular}[c]{@{}c@{}}$(\Lambda_{D_{19}},u_\infty(\Sigma,D_{19}))$\\ $=(1.9,1,U_{4,2})$\end{tabular} & \begin{tabular}[c]{@{}c@{}}$(\Lambda_{D_{20}},u_\infty(\Sigma,D_{20}))$\\ $=(1.9,1.2,U_{4,3})$\end{tabular} \\ \hline
	\end{tabular}
\bigskip
	\caption{Training dataset with the given characteristic grids $(\lambda^{(height)},\lambda^{(weight)})$. } \label{tab:training dataset}
\end{table}
\end{center}
Let $S_3(\Delta_j),j=1,\dots,M$ be a function subspace of $C^2([a_j,b_j])$ consisting of one dimensional, complex-valued functions in the direction of $\lambda^{(j)},j=1,\dots,M$ on the bounded interval $[a_j,b_j]$. The function in $S_3(\Delta_j),j=1,\dots,M$ is piecewise polynomial of degree 3 on every subinterval $[\lambda_{g_j-1}^{(l)},\lambda_{g_j}^{(j)}]$, where $g_j=1,\dots, k_j,j=1,\dots,M$. Then we introduce a {\it function subspace of multidimensional and complex-valued $C^2([a_j,b_j])$ functions} as
\begin{equation}\label{eq:function space of multidimensional}
    S_3\left(\Delta_1,\dots,\Delta_M\right),
\end{equation}
 on each rectangular grid
 \begin{equation}\label{eq:rectangular grid}
 	I_{g_1, g_2, \dots, g_{M}} := \prod_{j\in\mathscr{C}}[\lambda^{(j)}_{g_j},\lambda^{(j)}_{g_j+1}]
 \end{equation}
 for all $0\leq g_j\leq k_j-1,j=1,\dots,M$ that are piecewise polynomials of degree $3$ on every interval. For easy reference we provide the definition of B-splines.

\begin{defn}\label{defn:B-spline}
	The sets of $k_j+k,j=1,\dots,M$, B-spline basis functions $\{B_{l,k}(\lambda^{(j)})\}_{l=1}^{k_j+k}$ of degree $k$ of the function space $S_k(\Delta_j)$ are defined based on concurrent boundary knots vectors with Cox-deBoor recurrence\cite{de1972calculating},
	\begin{equation}\label{eq:recursion}
	B_{l,k}(\lambda^{(j)})=\frac{\lambda^{(j)}-\lambda^{(j)}_{l-1-k}}{\lambda^{(j)}_{l-1}-\lambda^{(j)}_{l-1-k}}B_{l-1,k-1}(\lambda^{(j)})+\frac{\lambda^{(j)}_{l}-\lambda^{(j)}}{\lambda^{(j)}_{l}-\lambda^{(j)}_{l-k}}	B_{l,k-1}(\lambda^{(j)})
	\end{equation}
	with
	\begin{equation}
	B_{l,0}(\lambda^{(j)})=\begin{cases}
	1&\text{if } \lambda^{(j)}_{l}\leq \lambda^{(j)}<\lambda^{(j)}_{l+1}\\
	0& \text{else}
	\end{cases},
	\end{equation}
	for $l=1,\dots, k_j+k$, where $k=3$ is for the cubic B-spline and $\lambda_l^{(j)}$ are elements of the knot vectors, satisfying the relation $\lambda_l^{(j)} < \lambda_{l+1}^{(j)}$.
\end{defn}

All methods are in the following using splines with a $k + 1$ regular knot vector, and the interior knots are the grid points. Based on Definition~\ref{defn:B-spline}, we next introduce a general learning model for the geometric shape generation through the multidimensional cubic B-spline interpolation.

\medskip

\noindent{\bf Body Learning Model I.}~~Given the training dataset $\mathbf{Z}:=\{(\Lambda_{D_i},u_\infty(\Sigma;D_i))\}_{i\in \mathscr{N}}$, the learning model $T_\mathbf{Z}\in S_3(\Delta_1,\dots,\Delta_M)$ at $\Lambda=(\lambda^{(1)},\dots,\lambda^{(M)})$ for the geometric body generation associated with the sets of the grids $\{\Delta_j\}_ {j\in \mathscr{C}}$ is defined as follows
	\begin{equation}\label{eq:cubic spline}
	T_\mathbf{Z}(\Lambda)=\sum_{g_1=1}^{ k_1+3}\cdots\sum_{g_M=1}^{k_M+3}c_{g_1,g_2,\dots ,g_{M}} \prod_{j\in\mathscr{C}}B_{g_j,3}^j(\lambda^{(j)}),\quad\lambda^{(j)}\in[a_j,b_j],
	\end{equation}
which satisfies the following conditions by \eqref{eq:traininng model}
	\begin{equation}\label{eq:input-output relation nonuniform}
	T_\mathbf{Z}(\Lambda_{D_i})=\sum_{g_1=1}^{ k_1+3}\cdots\sum_{g_M=1}^{k_M+3}c_{g_1,g_2,\dots, g_{M}} \prod_{j\in\mathscr{C}}B_{g_j,3}^j(\lambda^{(j)}_{i})={u}_\infty(\Sigma;D_{i}),
	\end{equation}
	where $c_{g_1,g_2,\dots, g_{M}}$ with $g_j=0,\dots,k_j,j=1,\dots,M$ are the coefficients to be determined from the training dataset $\mathbf{Z}$, $B_{g_j,3}^j(\lambda^{(j)}), j\in \mathscr{C}$ are the B-spline basis functions of degree 3 defined in \eqref{eq:recursion}, and $k_j,j=1,\dots,M$ is the number of different characteristic values in each direction.

\medskip

\begin{rem}
In Learning Model I, \eqref{eq:cubic spline} presents a general form of the learning model $T_\mathbf{Z}$ for the geometric body generation associated with the non-uniform grids $\{\Delta_j\}_ {j\in \mathscr{C}}$. The learning model eventually generates a B-spline interpolation with the associated spacing for each segment. If the training dataset consists of equidistant grids, we can derive a faster and easier learning model and this shall be provided in the next subsection.
\end{rem}

\subsection{Uniform B-spline}\label{Uniform B-spline}
In this subsection, we derive a learning model for a special case with the training dataset consisting of equidistant grids. By \eqref{eq:grids}, for the $M$ set of equidistant grids $	\Delta_j=\{\lambda_0^{(j)},\cdots,\lambda_{k_j}^{(j)}\}_{j\in \mathscr{C}}$  with additional conditions
\begin{equation}\label{eq:equidistant knots}
\lambda_{g_j}^{(j)}=a_j+g_j h_j,\quad h_j=\frac{b_j-a_j}{k_j},\quad, g_j=1,\dots, k_j,
\end{equation}
the B-spline basis function $\beta^k(t)$ of degree $k$ is a symmetrical, bell-shaped function constructed from $k+1$ times self-convolution of the $\beta^0(t)$ basis function of degree zero which is a centered rectangle around origin \cite{schoenberg1946contributions}
\begin{equation}\label{eq:beta zero}
\beta^0(t)=\begin{cases}
1,&-\frac{1}{2}<t<\frac{1}{2}\\
\frac{1}{2},&|x|=\frac{1}{2}\\
0,&\text{otherwise},
\end{cases}
\end{equation}
\begin{equation}\label{eq:self-convolution}
\beta^k(t)=\underbrace{\beta^0(t)\star\dots\star\beta^0(t)}_\text{$(k+1)$ times}.
\end{equation}
The centered symmetric B-spline of degree $k$ has an explicit expression \cite{schoenberg1973cardinal}
\begin{equation}\label{eq:explicit expression of beta n}
\beta^k(t)=\frac{1}{k!}\sum^{k+1}_{j=0}C^{k+1}_j(-1)^j(t+\frac{k+1}{2}-j)^k_+,
\end{equation}
where the function $x_+$ is defined as follows
\begin{equation}
x_+=\begin{cases}
x,&\text{for}\quad x>0,\\
0,&\text{otherwise}.
\end{cases}
\end{equation}
In this paper, we are particular intereted in the cubic B-spline. By \eqref{eq:explicit expression of beta n}, the closed-form representation of the cubic B-spline basis function can be also expressed as
\begin{equation}\label{eq:cubic basis}
\beta^3(t)=\frac{1}{6}\begin{cases}
(2-|t|)^3 & 1<|t| \leq 2,\\
4-6|t|^2+3|t|^3, & |t|\leq 1,\\
0, & \text{elsewhere},
\end{cases}
\end{equation}
which is used for preforming the interpolation. Then we choose the interpolation kernels to be
\begin{equation}\label{eq:uniform basis}
L_{g_j}(\lambda^{(j)})=\beta^3 \left(\lambda^{(j)}\right), \quad k_{j}=1,\dots, N_{pair}+3,
\end{equation}
as the basis of $S_3(\Delta_j)$ in the $\lambda^{(j)}$-direction such that
$L=\{L_1,L_2,\dots,L_{k_j+3}\}$
is a basis of the $(k_j+3)$-dimensional space $S_3(\Delta)$
and hence, the basis of the $\prod_{j\in\mathscr{C}}(k_j+3)$-dimensional space $S_3(\Delta_1,\dots,\Delta_M)$ in the directions of $\lambda^{(1)},\dots,\lambda^{(M)}$ is given by
\begin{equation}\label{eq:multi uniform basis}
	\{L_{g_1}L_{g_2}\dots L_{g_M}\vert g_j\in \{1,\dots, N_{pair}+3 \},j\in{\mathscr{C}}\}.
\end{equation}
Based on \eqref{eq:multi uniform basis},\eqref{eq:grids} and \eqref{eq:equidistant knots}, we next introduce the learning model for the uniform case.

\medskip

\noindent{\bf Body Learning Model II.}~~
	Given the training dataset $\mathbf{Z}:=\{(\Lambda_{D_i},u_\infty(\Sigma;D_i))\}_{i\in \mathscr{N}}$, the learning model $T_\mathbf{Z}\in S_3(\Delta_1,\dots,\Delta_M)$ at $\Lambda=(\lambda^{(1)},\dots,\lambda^{(M)})$ for the geometric body generation associated with the sets of equidistent grids $\{\Delta_j\}_ {j\in \mathscr{C}}$ defined in \eqref{eq:equidistant knots} is defined as follows

	\begin{equation}\label{eq:cubic spline uniform}
	T_\mathbf{Z}(\Lambda)=\sum_{g_1=1}^{ k_1+3}\cdots\sum_{g_M=1}^{k_M+3}c_{g_1,g_2,\dots, g_{M}} \prod_{j\in {\mathscr{C}}}L_{g_j}^j(\lambda^{(j)}),
	\end{equation}
which is required to satisfy the following conditions
	\begin{equation}\label{eq:input-output relation uniform}
	T_\mathbf{Z}(\Lambda_{D_i})=\sum_{g_1=1}^{ k_1+3}\cdots\sum_{g_M=1}^{k_M+3}c_{g_1,g_2,\dots ,g_{M}} \prod_{j\in {\mathscr{C}}}L_{g_j}^j(\lambda^{(j)}_{i})={u}_\infty(\Sigma;D_{i}),
	\end{equation}
	where $c_{g_1,g_2,\dots ,g_{M}}$ with $g_j=0,\dots,k_j,j=1,\dots,M.$ are the coefficients to be determined from the training dataset $\mathbf{Z}$, $L_{g_j}^j(\lambda^{(j)})$ are B-spline basis functions of degree 3 defined in \eqref{eq:uniform basis} and $k_j,j=1,\dots,M$ is the number of different characteristic values in each direction.

\medskip

\subsection{Natural Spline}\label{Minimization problem}

In Learning Models I and II, the learning functionals for the non-uniform and uniform case are respectively considered in the $\prod_{j\in \mathscr{C}}(k_j+3)$-dimensional space $S_3(\Delta_1,\dots,\Delta_M)$. $\prod_{j\in \mathscr{C}}(k_j+3)$ interpolation conditions are required to determine the coefficients $c_{i_1,i_2,\dots, i_{M}}$ in the training models. However, there are only $\prod_{j\in \mathscr{C}}(k_j+1)$ shape generators to specify $\prod_{j\in \mathscr{C}}(k_j+1)$ conditions in \eqref{eq:input-output relation nonuniform} or \eqref{eq:input-output relation uniform}. To obtain a unique correlation between the characteristic values and the shape generator, we need to add $\prod_{j\in \mathscr{C}}2=2^{M}$ conditions, which define the second-order derivatives of the spline function at the boundary $a_j$ and $b_j$ to be equal to 0 and lead to a natural spline.

\subsection{Prediction on shape generator}

With the Learning Models I and II established in the previous subsections, for an input new set of characteristic values $\Lambda_{D_{new}}$ associated with a new geometric body $D_{new}$, the unknown shape generator can be generated as follows,
\begin{equation}\label{eq:predict far field pattern}
T_\mathbf{Z}(\Lambda_{D_{{new}}})=\sum_{g_1=1}^{ k_1+3}\cdots\sum_{g_M=1}^{k_M+3}c_{g_1,g_2,\dots, g_{M}} \prod_{j\in {\mathscr{C}}}L_{g_j}^j(\lambda^{(j)}_{{new}})=\hat{u}_\infty(\Sigma;D_{{{new}}})\approx u_\infty(\Sigma;D_{{{new}}}),
\end{equation}
where the coefficients $c_{g_1,g_2,\dots, g_{M}}$ in \eqref{eq:predict far field pattern} could be determined by solving the natural spline problem.

\subsection{Reconstruction}\label{Reconstruction}
In this subsection, we briefly outline the Fourier method for the reconstruction of geometry shape $D_{{new}}$ by using the  shape generators ${u}_\infty(\Sigma;D_{new})$.

Define the periodic  Sobolev space by
\begin{equation*}
  H^{\sigma}(\mathbb{R}^d)=\left \{ g\in L^2(\mathbb{R}^d): (1+| \xi|^2)^{\frac{\sigma}{2}}  \widehat{g}(\xi) \in L^2(\mathbb{R}^d),\right \},
\end{equation*}
where $\sigma \geq 1$, $ \xi\in \mathbb{Z}^d$ and $\widehat g(\xi)$ denote the Fourier coefficients of $g$.
Suppose that $f\in H^{\sigma}(\mathbb{R}^d)$ has a compact support in domain $V_0=(-a/2, a/2)^d,\, (a>0)$,
then the Fourier transform of $f$ is represented by
\begin{equation}\label{eq:Fourier_coefficient}
   \widehat{f}_{\xi}=\frac{1}{a^d} \int_{V_0}f(x)\,\overline{\phi_{ \xi }(x)}\,\mathrm{d}x,
\end{equation}
where the overbar stands for the complex conjugate and the Fourier basis functions are given by
\begin{equation*}
  \phi_{ \xi}(x)=\exp\left(\mathrm{i}\frac{2\pi}{a}  \xi \cdot x \right),\quad \xi\in \mathbb{Z}^d.
\end{equation*}

\begin{defn}[Admissible wavenumbers and observation directions]\label{def:wavenumber}
Let $\mu$ be a sufficiently small positive constant such that $0<\mu<1$ and
\begin{equation*}
   \xi_{0}:=
   \begin{cases}
   (\mu,0), & d=2,\medskip\\
   (\mu,0,0), & d=3,\\
   \end{cases}
\end{equation*}
then the admissible set of wavenumbers is defined by
\begin{equation*}
 \mathbb{K}:= \left\{\frac{2\pi}{a}|{\xi}|:{\xi}\in \mathbb{Z}^{3}\backslash \{ 0\}  \right\} \cup k_0,
\end{equation*}
correspondingly, the admissiable set of observation directions is given by
\begin{equation*}
 \mathbb{X}:= \left\{\frac{ \xi}{|{\xi}|}:{\xi}\in \mathbb{Z}^{3}\backslash \{ 0\} \right\} \cup \hat x_0,
\end{equation*}
where $k_0=2\pi|\xi_0|/a$ and $\hat x_0= \xi_0/|\xi_0|$ for $\xi=0\in \mathbb{Z}^d$.
\end{defn}

Due to $\mathrm{supp}\, S \subset\subset V_0$, for $ \xi\in \mathbb{Z}^{3}\backslash \{0\}$, the far-field pattern defined in
\eqref{eq:sss1} can be written as
\begin{equation}\label{eq:far2}
	u_\infty(\hat x, k; f)
      = C_{d,k}\int_{\mathbb{R}^d} e^{-ik\hat{x}\cdot y}
  f(y)\ dy\\
     = C_{d,k}\int_{V_0} e^{-ik\hat{x}\cdot y}
  f(y)\ dy,
\end{equation}
where $k\in \mathbb{K}$ and $\hat x\in \mathbb{X}$ depend on $\xi$.
Combining \eqref{eq:Fourier_coefficient} and \eqref{eq:far2}, one has

\begin{equation}\label{eq:f_coefficients}
\widehat{f}_{\xi}=\frac{1}{a^d C_{d,k}}u_\infty(\hat x, k; f),\quad \xi\in \mathbb{Z}^{3}\backslash \{ 0\}.
\end{equation}
For $\xi=  0$,  using the Fourier expansion of $f$,
 we derive that
\begin{equation*}
\begin{aligned}
	u_\infty(\hat x_0, k_0; f)
     &=C_{d,k}\int_{V_0} e^{-ik_0\hat{x}_0\cdot y}
  f(y)\ dy,\\
     &= C_{d,k} \int_{V_0}\left( \widehat{f}_{0}+\sum_{\xi \in \mathbb{Z}^d \backslash\{ 0\}} \widehat{f}_{ \xi}\,  \phi_{ \xi}  \right)\overline{\phi_{\xi_0}(y)} \ dy\\
     &= C_{d,k} \int_{V_0} \widehat{f}_{0} \overline{\phi_{\xi_0}(y)} \ dy +C_{d,k}  \sum_{\xi \in \mathbb{Z}^d \backslash\{ 0\}} \widehat{f}_{\xi} \int_{V_0} \phi_{\xi}(y) \overline{\phi_{\xi_0}(y)} \  dy\\
     &= a^d C_{d,k}\frac{\mathrm{\sin}(\mu \pi)}{\mu \pi}\widehat{f}_{ 0}
     +C_{d,k} \sum_{\xi \in \mathbb{Z}^d \backslash\{0\}} \widehat{f}_{ \xi} \int_{V_0} \phi_{\xi}(y) \overline{\phi_{\xi_0}(y)} \ dy,
\end{aligned}
\end{equation*}
which implies
\begin{equation}\label{eq:f0_coefficients}
  \widehat{f}_{0}= \frac{\mu \pi}{a^d \sin\mu \pi}\left\{  \frac{u_{\infty}(\hat{x}_{ 0 },k_{ 0}; f )}{C_{d,k}} -  \sum_{{\xi}\in \mathbb{Z}^{3}\backslash \{0\}} \widehat{f}_{ {\xi}} \int_{V_0} \phi_{ {\xi}}(y) \overline{\phi_{ {\xi}_0}(y)} \ dy    \right\}.
\end{equation}
Therefore, the Fourier method is to approximate $f$ by a truncated Fourier expansion
 \begin{equation*}
    f_N=\widehat{f}_{0}+ \sum_{1\leq |\xi|_{\infty}\leq N} \widehat{f}_{ \xi}\,  \phi_{ \xi},
 \end{equation*}
where $N\in \mathbb{N}_+$ denotes the truncation order and the Fourier coefficients are given by \eqref{eq:f_coefficients} and \eqref{eq:f0_coefficients}. Hence  the domain $D_{new}$ is determined  since the set $D_{new}$ is the external \emph{shape} of $f$.

Next, we investigate the stability of the proposed Fourier method.
In practical computation, there exists some noise between the shape generators $u_\infty(\Sigma;D_{{new}})$  and the predictions $\hat u_\infty(\Sigma;D_{{new}})$, which satisfies
\begin{equation*}
  \left\|\hat u_\infty(\Sigma;D_{{new}})- u_\infty(\Sigma;D_{{new}})\right\|_{L^2}\leq {\delta}\| u_\infty(\Sigma;D_{{new}})\|_{L^2},
\end{equation*}
where $\delta >0$ denotes the noise level. Noting that $(\hat x, k)\in \mathbb{X}\times \mathbb{K}\subset \Sigma$, then the approximation of $f$ from  predicted  shape generators is given by
 \begin{equation*}
    f_N^{\delta}=\widehat{f}_{0}^{\delta}+ \sum_{1\leq |\xi|_{\infty}\leq N} \widehat{f}_{ \xi}^{\delta} \,  \phi_{ \xi},
 \end{equation*}
where
\begin{align}
&\label{eq:s_coefficients_noise}\widehat{f}^{\delta}_{\xi}=\frac{1}{a^d C_{d,k}}\hat u_\infty(\hat x, k),\quad \xi\in \mathbb{Z}^{3}\backslash \{ 0\},\\
 &\label{eq:s0_coefficients_noise} \widehat{f}_{0}^{\delta}= \frac{\mu \pi}{a^d \sin\mu \pi}\left\{  \frac{\hat u_{\infty}(\hat{x}_{ 0 },k_{ 0} )}{C_{d,k}} -  \sum_{{\xi}\in \mathbb{Z}^{3}\backslash \{0\}} \widehat{f}_{ {\xi}}^{\delta} \int_{V_0} \phi_{ {\xi}}(y) \overline{\phi_{ {\xi}_0}(y)} \ dy    \right\}.
  \end{align}

\begin{thm}\label{thm:stability}
Let  $f$ be a compactly supported function in $H^{\sigma}(\mathbb{R}^d)$, $\sigma \geq 1$, with $ \mathrm{supp} f \subset\subset V_0$,  then we have the following estimate
\begin{equation*}
  \|f^{\delta}-f\|_{L^2(V_0)}^2\leq C\delta+ C (\tau^d+\tau^{-2}) \delta^{\frac{4}{2+d}}.
\end{equation*}
where $d\leq \tau\in \mathbb{R}_+$ and $C$ is a constant which depends on $f, a, d, \mu$.
\end{thm}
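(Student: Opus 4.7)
The strategy is the standard stability analysis for a truncated Fourier inversion scheme: split the total error into a deterministic truncation error and a noise-propagation error, estimate each separately, and balance by the choice of truncation level. Writing $f_N = \widehat f_0 + \sum_{1\leq |\xi|_\infty\leq N}\widehat f_\xi \phi_\xi$ for the partial sum with the exact coefficients and $f^\delta_N$ for its noisy counterpart defined through \eqref{eq:s_coefficients_noise} and \eqref{eq:s0_coefficients_noise}, the triangle inequality gives
\[
\|f^{\delta} - f\|^2_{L^2(V_0)} \leq 2\|f^{\delta}_N - f_N\|^2_{L^2(V_0)} + 2\|f_N - f\|^2_{L^2(V_0)}.
\]
The second term is pure truncation error, while the first is the noise-propagation error inherited from the inversion formulas \eqref{eq:f_coefficients} and \eqref{eq:f0_coefficients}.

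The truncation error is handled by Parseval together with the Sobolev regularity. Since $f\in H^\sigma(\mathbb{R}^d)$ with $\sigma\geq 1$ and $\mathrm{supp}\,f\subset\subset V_0$,
\[
\|f_N - f\|^2_{L^2(V_0)} = \sum_{|\xi|_\infty > N}|\widehat f_\xi|^2 \leq (1+N^2)^{-\sigma}\|f\|^2_{H^\sigma} \leq C N^{-2}.
\]
For the noise term I would treat the bulk ($\xi\neq 0$) and the zero mode separately. For each $\xi\in\mathbb{Z}^d\setminus\{0\}$, subtracting \eqref{eq:s_coefficients_noise} from \eqref{eq:f_coefficients} yields
\[
\bigl|\widehat f^\delta_\xi - \widehat f_\xi\bigr| = \frac{\bigl|\hat u_\infty(\hat x,k) - u_\infty(\hat x,k)\bigr|}{a^d\,|C_{d,k}|},
\]
and by Parseval the bulk $\ell^2$-contribution is controlled by the squared discrete noise weighted by $|C_{d,k}|^{-2}$; combined with the noise hypothesis at the $O(N^d)$ admissible modes this produces an estimate of the form $C N^d \delta^2$. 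The zeroth coefficient, controlled by comparing \eqref{eq:f0_coefficients} and \eqref{eq:s0_coefficients_noise}, splits into a pointwise noise contribution at $(\hat x_0, k_0)$ plus an accumulated sum $\sum_{\xi\neq 0}(\widehat f^\delta_\xi - \widehat f_\xi)\int_{V_0}\phi_\xi\overline{\phi_{\xi_0}}$; the former produces the isolated linear term $C\delta$ in the final bound (with constants controlled by $\mu\pi/\sin\mu\pi$), and the latter is absorbed into the bulk estimate.

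Putting these pieces together yields
\[
\|f^\delta - f\|^2_{L^2(V_0)} \leq C\delta + C\bigl(N^{-2} + N^d\delta^2\bigr).
\]
Setting $N = \tau\,\delta^{-2/(d+2)}$ for $\tau\geq d$ balances the two $N$-dependent terms, since $N^{-2} = \tau^{-2}\delta^{4/(d+2)}$ and $N^d\delta^2 = \tau^d\delta^{4/(d+2)}$; this is exactly the advertised bound. The main technical obstacle I anticipate is the passage from the continuum $L^2$-noise hypothesis on $\hat u_\infty - u_\infty$ over $\Sigma$ to pointwise control at each admissible sample $(\hat x, k)$ indexed by $\xi$, together with handling the singularity of $|C_{d,k}|^{-2}\sim k^{-(d-2)}$ near $k=0$; the auxiliary pair $(\hat x_0, k_0)$ introduced in Definition~\ref{def:wavenumber} and the separate treatment of $\widehat f_0$ through \eqref{eq:f0_coefficients} are precisely what isolate this singular direction and keep all constants finite.
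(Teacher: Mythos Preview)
Your proposal is correct and follows essentially the same route as the paper: split by Parseval into a high-frequency tail $|\xi|_\infty>N$ controlled via $H^\sigma$-regularity as $CN^{-2}$, a low-frequency bulk $1\leq|\xi|_\infty\leq N$ giving $CN^d\delta^2$ from the pointwise coefficient bound, a separate treatment of $\widehat f_0$ through \eqref{eq:f0_coefficients}--\eqref{eq:s0_coefficients_noise} that contributes the isolated $C\delta$ term, and the balancing choice $N=\tau\delta^{-2/(d+2)}$. The only cosmetic difference is that the paper splits directly in frequency via Plancherel rather than via the triangle inequality on $f_N^\delta-f_N$ and $f_N-f$, but the arithmetic is identical. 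Regarding the obstacle you flag: the paper in fact silently uses the \emph{pointwise} relative bound $|\hat u_\infty(\hat x,k)-u_\infty(\hat x,k)|\leq\delta\,|u_\infty(\hat x,k)|$ (after which the factor $C_{d,k}$ cancels against the same factor in $|u_\infty|$, so no small-$k$ singularity arises), rather than working from the stated $L^2$ hypothesis; your instinct that this passage needs care is well placed.
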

\begin{proof}[\bf Proof.]
Using the Plancherel theorem, we have
\begin{equation}\label{eq:main}
\begin{aligned}
  \|f^{\delta}-f\|_{L^2(D)}^2
  &=\|f^{\delta}-f\|_{L^2(\mathbb{R}^d)}^2\\
  &=\frac{1}{a^{2d}}\int_{\mathbb{R}^d} |\widehat f_{\xi}^{\delta}- \widehat f_{\xi}|^2 \   d\xi\\
  &=\frac{1}{a^{2d}}\left(\int_{|\xi|_{\infty}\leq N} |\widehat f_{ \xi}^{\delta}- \widehat f_{\xi}|^2 \  d\xi + \int_{|\xi|_{\infty}> N} |\widehat f_{\xi}^{\delta}- \widehat f_{\xi}|^2 \ d\xi   \right),
  \end{aligned}
\end{equation}
where $N\in \mathbb{N}_+$.
Due to $f\in H^{\sigma}(\mathbb{R}^d)$, that is,
\begin{equation*}
  \left(1+|\xi|^2\right)^{\frac{\alpha}{2}} \widehat f_{\xi} \in L^2(\mathbb{R}^d), \quad \forall\, |\alpha|\leq \sigma.
\end{equation*}
It means that both $|\xi|\widehat f_{\xi}$ and $|\xi|\widehat f_{\xi}^{\delta}$ are bounded in $L^2(\mathbb{R}^d)$, so we can find $N>0$ , such that
\begin{equation}\label{eq:part2}
 \int_{|\xi|_{\infty}> N} |\widehat f_{\xi}^{\delta}- \widehat f_{\xi}|^2 \ d\xi\leq \frac{1}{N^2}\int_{|\bm \xi|_{\infty}> N}
 |\xi|^2|\widehat f_{\xi}^{\delta}- \widehat f_{\xi}|^2 \ d\xi<\frac{C_1}{N^2},
\end{equation}
where $C_1>0$ is a constant.
For $1\leq |\xi|_{\infty}\leq N$, from \eqref{eq:f_coefficients} and \eqref{eq:s_coefficients_noise}, we have
\begin{equation*}
\begin{aligned}
  |\widehat f_{\xi}^{\delta}- \widehat f_{\xi}|
  &=\frac{1}{a^{d} C_{d,k}}\left|\hat u_{\infty}(\hat x, k)-u_{\infty}(\hat x, k)\right|\\
  &\leq \frac{\delta}{a^{d} C_{d,k}}|u_{\infty}(\hat x, k)|\\
  &=\frac{\delta}{a^{d} C_{d,k}}\left|C_{d,k}\int_{V_0} f(y) e^{-ik\hat{x}\cdot y}\ dy\right|\\
  &\leq \frac{\delta}{a^{d}}\left(\int_{V_0} |f(y)|^2 \ dy\right)^{\frac{1}{2}}
  \left(\int_{V_0}|e^{-ik\hat{x}\cdot y}|^2\,\mathrm{d}y\right)^{\frac{1}{2}}\\
  & = \frac{\|f\|_{L^2(V_0)}}{a^{d/2}}\delta ,
   \end{aligned}
\end{equation*}
which implies
\begin{equation}\label{eq:part1}
  \int_{1\leq|\xi|_{\infty}\leq N} |\widehat f_{\xi}^{\delta}- \widehat f_{ \xi}|^2 \ d\xi\leq C_2 (2N+1)^d \delta^2,
\end{equation}
for $C_2=\|f\|_{L^2(V_0)}^2/ a^d$.
Define $ \xi=(\xi_{1},\xi_2) \in \mathbb{Z}^2$ or $ \xi=(\xi_{1},\xi_2, \xi_3) \in \mathbb{Z}^3$, by a straight forward calculation, one finds that
\begin{equation*}
\int_{V_0}\phi_{\xi}(y) \overline{\phi_{\xi_0}(y)}\ dy =\left\{
\begin{aligned}
 &0,   & | {\xi}|\neq|\xi_1|,  \\
 &-\frac{a^d\cos \xi_1\pi \sin\lambda\pi}{(\xi_{1}-\lambda)\pi},  & | {\xi}|=|\xi_1|.
\end{aligned}
\right.
\end{equation*}
For $\xi =0$, using \eqref{eq:f0_coefficients}, \eqref{eq:s0_coefficients_noise},
 \eqref{eq:part2}, \eqref{eq:part1} and the last equation,  it derives that
\begin{equation}\label{eq:part0}
\begin{aligned}
  |\widehat f_{ 0}^{\delta}- \widehat f_{ 0}|
   \leq & \frac{\mu \pi}{ a^d C_{d,k}   \sin \mu \pi} \left| \hat u_{\infty}^{\delta}(\hat{x}_{0},k_{0})-u_{\infty}^{\delta}(\hat{x}_{0},k_{0})
   \right| \\
   & +\frac{\mu \pi}{a^{d} \sin \mu \pi}\sum_{1\leq| {\xi}|_{\infty}\leq N}\left|\widehat{f}^{\delta}_{\xi}- \widehat{f}_{ {\xi}}\right|  \left|\int_{V_0}\phi_{\xi}(y) \overline{\phi_{ {\xi}_0}(y)}\ dy \right|\\
   &+\frac{\mu \pi}{a^{d}\sin \mu \pi} \sum_{|{\xi}|_{\infty}\geq N}\left|\widehat{f}_{\xi}^{\delta}-\widehat{f}_{ \xi}\right| \left|\int_{V_0}\phi_{\xi}(x) \overline{\phi_{ \xi_0}(x)}\ dy \right|\\
  \leq & C_3\delta+ \sqrt{C_2} (2N+1)^d \delta+\frac{\sqrt{C_1}}{N},
   \end{aligned}
\end{equation}
where $C_3={\mu \pi\|f\|_{L^2(V_0)} }/(a^{d/2}\sin \mu \pi)$.
Hence, substituting \eqref{eq:part2}, \eqref{eq:part1} and \eqref{eq:part0} into \eqref{eq:main}, it deduces that
\begin{equation*}
   \|f^{\delta}-f\|_{L^2(V_0)}^2\leq C\delta^2+ C N^d \delta^2+\frac{C}{N^2},
\end{equation*}
where $C=\max\{2 C_1, 2^{d+1}C_2, C_3^2\}/ a^{2d}$.
Furthermore, if we take $N=\tau\delta^{-\frac{2}{2+d}}$ with $\tau\geq d$ in Theorem \ref{thm:stability}, then it holds that
\begin{equation*}
  \|f^{\delta}-f\|_{L^2(V_0)}^2\leq C\delta+ C (\tau^d+\tau^{-2}) \delta^{\frac{4}{2+d}}.
\end{equation*}
\end{proof}

Let $N=\left[\tau\delta^{-\frac{2}{2+d}},\, \tau\geq d\right]$, here $[X]$ denotes the largest integer that is smaller than $X + 1$.
From definition \ref{def:wavenumber},  the truncated wavenumbers and observation directions can be written as
\begin{equation*}
\begin{aligned}
& \mathbb{K}_{N}:= \left\{\frac{2\pi}{a}|{\xi}|:1\leq |{\xi}|\leq N  \right\} \cup k_0,\\
& \mathbb{X}_{N}:= \left\{\frac{ \xi}{|{\xi}|}:1\leq |{\xi}|\leq N \right\} \cup \hat x_0.
\end{aligned}
\end{equation*}
Thus, the truncated Fourier expansion of $f$ from the predictions $\{\hat u_\infty(\Sigma;D_{{new}_q})\}_{q\in\mathscr{Q}}$  takes the form
  \begin{equation}\label{Fourier expansion}
    f_{N}^{\delta}:=\widehat{f}_{0}^{\delta} + \sum_{1\leq |\xi|_{\infty}\leq N} \widehat{f}_{\xi}^{\delta}\,  \phi_{\xi}(x),
 \end{equation}
where
\begin{equation}\label{Fourier expansion_coefficients}
    \begin{aligned}
&\widehat{f}^{\delta}_{\xi}=\frac{1}{a^d C_{d,k}}\hat u_\infty(\hat x, k),\quad 1\leq |\xi|_{\infty}\leq N,\\
& \widehat{f}_{0}^{\delta}= \frac{\mu \pi}{a^d \sin\mu \pi}\left\{  \frac{\hat u_{\infty}(\hat{x}_{ 0 },k_{ 0} )}{C_{d,k}} -  \sum_{1\leq |\xi|_{\infty}\leq N} \widehat{f}_{ {\xi}}^{\delta} \int_{V_0} \phi_{ {\xi}}(y) \overline{\phi_{ {\xi}_0}(y)} \ dy    \right\}.
  \end{aligned}
\end{equation}
%

\subsection{Summary}
Motivated by above discussion, we are ready to present our novel modeling methodology for geometric shape in $\mathbb{R}^d, d=2,3$, see Algorithm \ref{GDG}.
\begin{algorithm}\label{Algorithm GDG}
	\caption{Inverse-scattering-based geometric body generation scheme}
	\label{GDG}
	\begin{algorithmic}[1]
\State Select the parameter $N$, the set of admissible wavenumbers $ \mathbb{K}_{N}$ and the set of admissible observation directions  $\mathbb{X}_{N}$.
\State 	Given a training dataset $\mathbf{Z}:=\{(\Lambda_{D_i},u_\infty(\hat x, k;D_i))\}_{i\in \mathscr{N}}$ for $\hat x\in X_N, k\in K_N $, obtain the coefficients $c_{g_1,\dots,g_{M}}$ of the learning model $T_\mathbf{Z}$ by solving the problem of the natural spline interpolation.
\State Given the characteristic sets $\Lambda_{D_{new}}$, predict the new shape generators $\{\hat u_\infty(\hat x, k)\}$ for $\hat x\in X_N, k\in K_N $ with the use of the learning model $T_\mathbf{Z}$.
\State Compute the Fourier coefficients $\widehat{f}_{ 0}^{\delta}$ and $\widehat{f}_{ \xi}^{\delta}$ defined in \eqref{Fourier expansion_coefficients} for $1\leq| \xi|_\infty\leq N_t$.
\State Select a sampling mesh $\mathcal{T}_h$ in a region $V_0$. For each sampling point $z_j\in \mathcal{T}$, calculate the imaging function $f_{N}^{\delta}$ defined in \eqref{Fourier expansion}.  $D_{new}$ is obtained as the external \emph{shape} of $f_{N}^{\delta}$.	
	\end{algorithmic}
\end{algorithm}

\section{Numerical examples }

In this section, several  numerical examples are conducted to show that the proposed method is effective and efficient.

The proposed algorithm is implemented by using Matlab 2016. The shape generator $\{u_\infty(\hat x, k;D_i)\}_{i\in \mathscr{N}}$ is obtained by solving the direct problem of \eqref{sourceScattering}. To avoid the inverse crime, we use the quadratic finite elements on a truncated spherical domain enclosed by a PML layer. The mesh of the forward solver is successively refined till the relative error of the successive measured scattered data is below $0.1\%$. Then artificial shape generators are generated by applying the Kirchhoff integral formula to the scattered data.
Thus the training dataset is given by
\begin{equation*}
\left\{ (\Lambda_{D_i},u_\infty(\hat x_j, k_j;D_i)): \hat{x}_{j}\in \mathbb{X}_{N},\, k_{j}\in \mathbb{K}_{N}\right \},
\end{equation*}
where  $j=1,2, \cdots,(2N+1)^{d}$ and $i=1,2,\cdots, N_t$ denotes the $i$-th geometry shape. In what follows, we set $\tau=2$ ($\tau=3$) for $d=2$ ($d=3$) and $\delta=1\%$, then we have $N=20$ ($N=19$) for $d=2$ ($d=3$).

Next, we present the implementation of interpolation.
The characteristic value set is given by $\{\Lambda_{D_i}\}$, where
$\Lambda_{D_i}=\{\lambda_i^{(1)}, \lambda_i^{(2)},\cdots,\lambda_i^{({M})}\}$ has ${M}$ variables. For a fixed wavenumber $k_j$, we use cubic spline interpolation
to obtain the coefficients $c_{i_1\dots,i_{{M}}}$  from the characteristic value $\Lambda_{D_i}$ and the shape generator $u_{\infty}(\hat{x}_j,k_j;D_i)$. Therefore, given a new
characteristic value $\Lambda_{D_{new}}=\{\lambda_{D_{new}}^{(1)}, \lambda_{D_{new}}^{(2)},\cdots,\lambda_{D_{new}}^{(M)}\}$, we  obtain the predicted shape generators $\{\hat u_{\infty}(\hat{x}_j,k_j; D_{new})\},\, j=1,2,\cdots, (2N+1)^d$.

Finally, we specify details of recovering the geometry. As discussed above, reconstructing the geometry shape is equally to reconstructing the source function $f$.
In the discrete formula, the domain $V_0$ is divided into a uniform mesh with size $100\times 100$ in two dimensions and  size $100\times 100\times 100 $ in three dimensions.
Further, the approximated  Fourier series $f_{N}^{\delta}$ are computed at the mesh nodes $\mathcal{T}_j,\, j=1, \cdots, 100^3$ in \eqref{Fourier expansion}.
Thus, the  geometry shape $D_{new}$ is approximated by the boundary of the imaging results $f_{N}^{\delta}$.

\subsection{Kite shaped Experiments}
In the first example,  we  aim to  reconstruct a kite shaped domain with  scale changing.
The kite shaped domain is parameterized by
\begin{equation*}
x(t)=( \beta_1(\cos t +0.65\cos 2t -0.65), \ 1.5\beta_2\sin t), \quad t\in [0, 2\pi],
\end{equation*}
where $\beta_1$ and $\beta_2$ are scale factors (characteristic values)  with $\beta_1, \beta_2 \in [0.5,\,1.8]$.
The training dataset consists of $14 \times 14$ different scale domains, i.e., $\beta_1$ and $\beta_2$ uniformly distributed on $[0.5,\,1.8]$ with $M=14^2$.
Next, we consider four sets of different scale factors which are not covered by the training data. In this numerical experiments, the imaging results with different characteristic values are shown in Figure \ref{fig:kite}, where the black dotted lines denote the exact boundary. It is clear that the reconstructions are very closed to the exact domains.

\begin{figure}
	\subfigure[]{\includegraphics[width=0.48\textwidth]{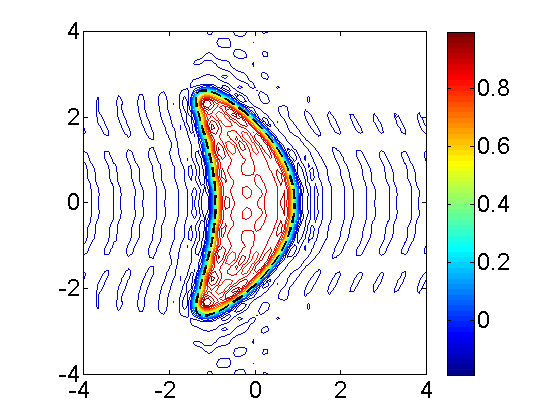}}
	\subfigure[]{\includegraphics[width=0.48\textwidth]{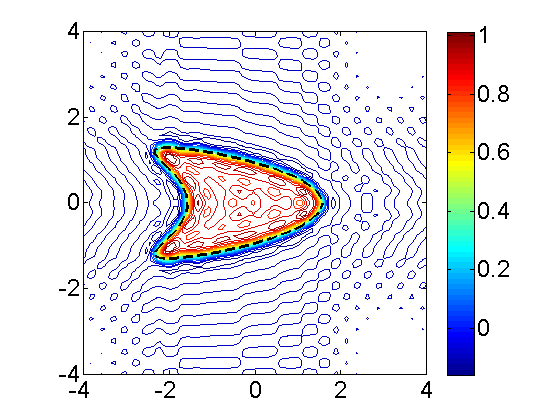}}\\
	\subfigure[]{\includegraphics[width=0.48\textwidth]{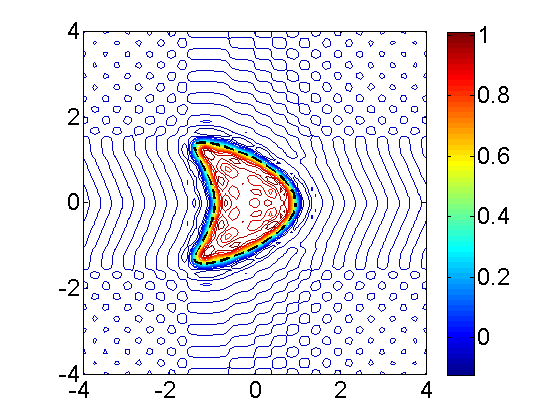}}
	\subfigure[]{\includegraphics[width=0.48\textwidth]{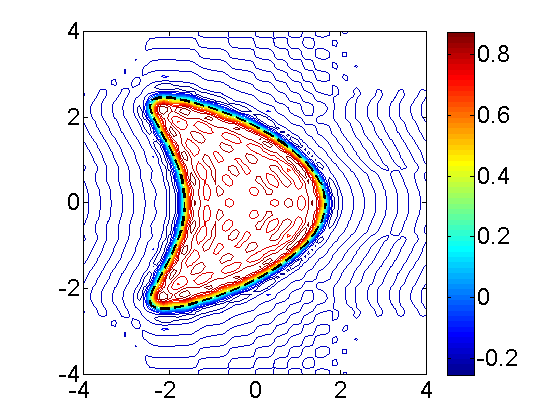}}
	\caption{\label{fig:kite} Contour plots of reconstructed kite shape with different scale factors $(\beta_1, \beta_2)$:
		(a) $(0.93, 1.76)$, (b) $(1.58, 0.87)$, (c) $(0.95, 0.95)$, (d) $(1.65, 1.65)$. }
\end{figure}

\subsection{Rounded triangle and apple shaped Experiments}
In the second example,  we  aim to  recover multi-domain with different scale factors.
The apple shaped domain is parameterized by
\begin{equation*}
y(t)= \beta_1((0.5+0.4\cos t+0.1 \sin 2t)/(1+0.7\cos t))(\cos t,\  \sin t) \quad t\in [0, 2\pi],
\end{equation*}
and the rounded triangle shaped domain is parameterized by
\begin{equation*}
z(t)= \beta_2(1+0.15\cos 3t)(\cos t,\  \sin t) \quad t\in [0, 2\pi],
\end{equation*}
where $\beta_1\in [1, 2]$ and $\beta_2\in [0.5, 1.5]$ are scale factors (characteristic values) for different domains.
The training dataset consists of $11 \times 11$ different scale domains, that is, $\beta_1$ and $\beta_2$ uniformly distributed on $[1, 2 ]\times [0.5, 1]$.
Similarly, we give four sets of different scale factors which are not covered by the training data.  Figure \ref{fig:mix} shows the
the reconstruction of multi-domain with different characteristic values via contour plots,
where the black dotted lines denote the exact boundary. It demonstrates very good imaging performance of the approach.

\begin{figure}
	\subfigure[]{\includegraphics[width=0.48\textwidth]{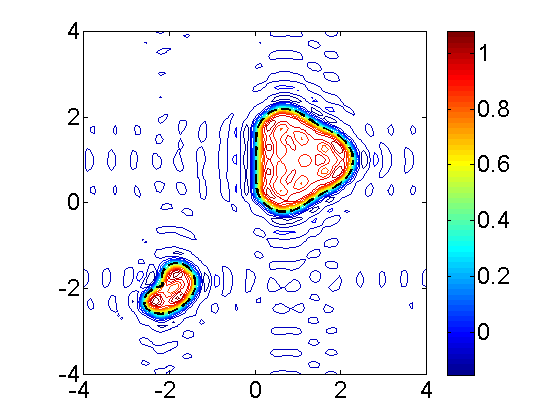}}
	\subfigure[]{\includegraphics[width=0.48\textwidth]{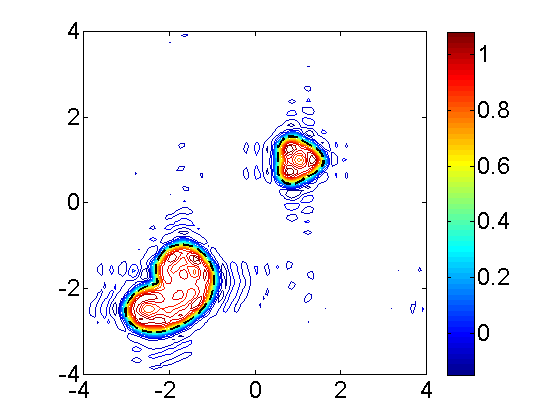}}\\
	\subfigure[]{\includegraphics[width=0.48\textwidth]{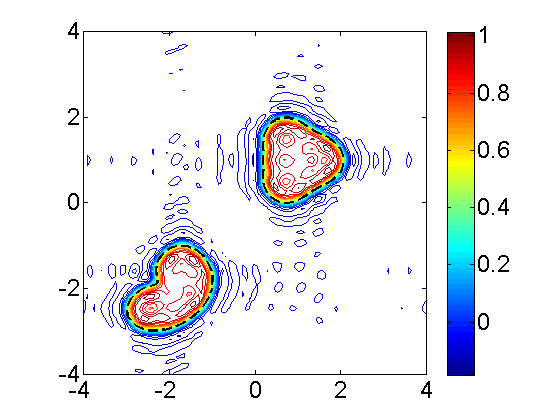}}
	\subfigure[]{\includegraphics[width=0.48\textwidth]{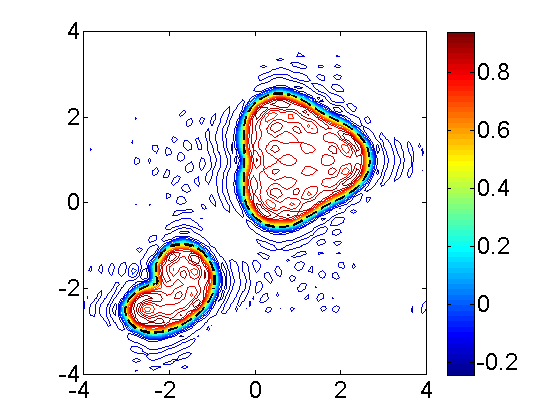}}
	\caption{\label{fig:mix} Contour plots of reconstructed mixed shape with different scale factors $(\beta_1,\beta_2)$:
		(a) $(1.13, 1.13)$, (b) $(1.94, 0.53)$, (c) $(1.88, 0.94)$, (d) $(1.96, 1.47)$. }
\end{figure}

\subsection{Rectangular Solid Experiments}
In the third example,  we verify the proposed method by using a set of artificial experiments on  rectangular solid.
The training dataset consists of $125$ rectangular solids with different
height, width and length. Here the height, width and length are uniformly distributed on $[1,2]$ with $5$ amounts, i.e., $1, 1.25, 1.5, 1.75, 2$.
Here, We consider four sets of different height, width and length of rectangular solids which are not covered by the training data.  The imaging results with different characteristic values are shown in Figure \ref{fig:rectangular}, where the black dotted lines denote the shadows of the exact cube boundary.  Due to discontinuities of the source, there
is Gibbs phenomena on the boundary of the rectangular solids.  On the whole, given the characteristic values, our proposed method is valid for determining the geometry shape.

\begin{figure}
	\subfigure[]{\includegraphics[width=0.45\textwidth]{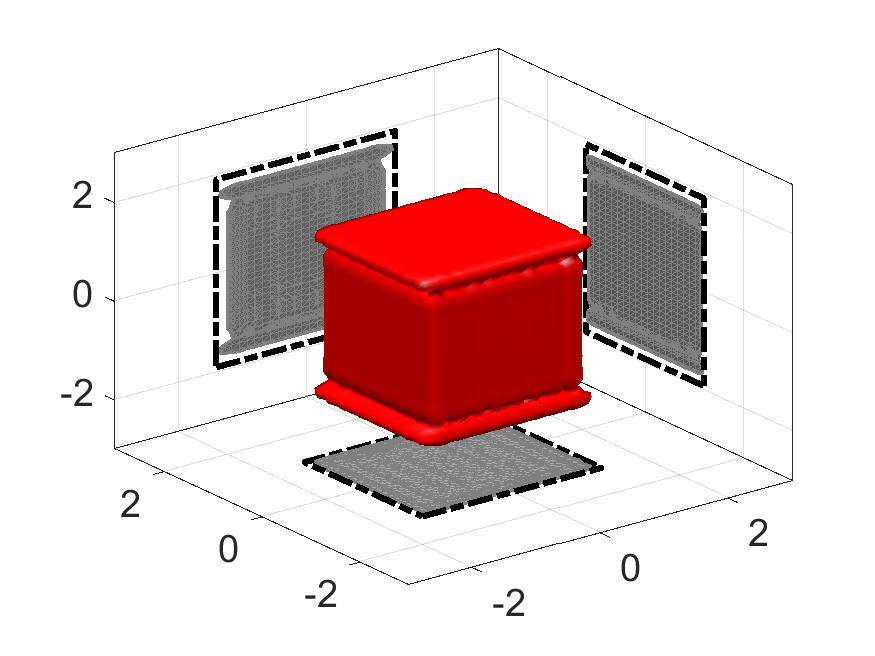}}
	\subfigure[]{\includegraphics[width=0.45\textwidth]{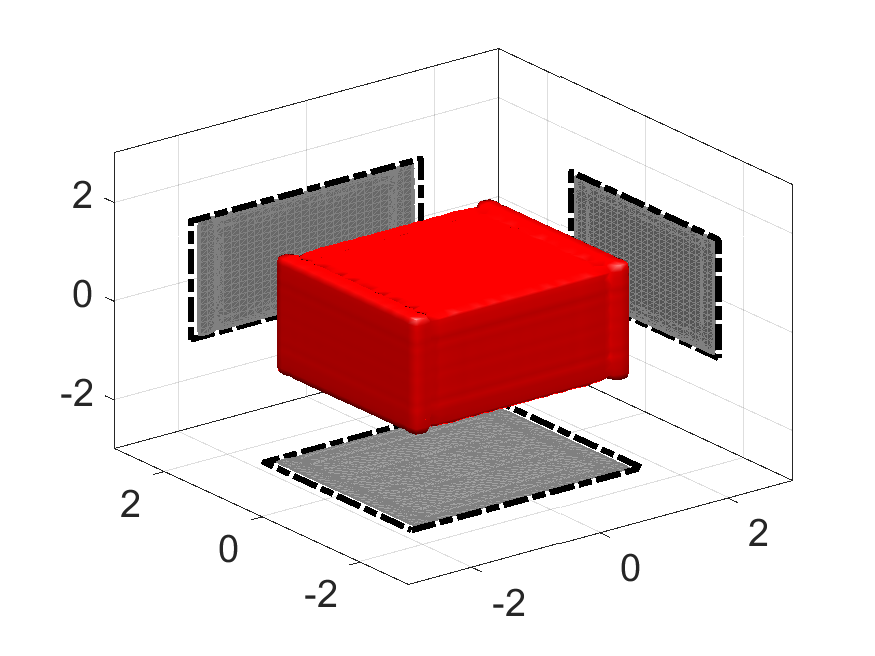}}\\
	\subfigure[]{\includegraphics[width=0.45\textwidth]{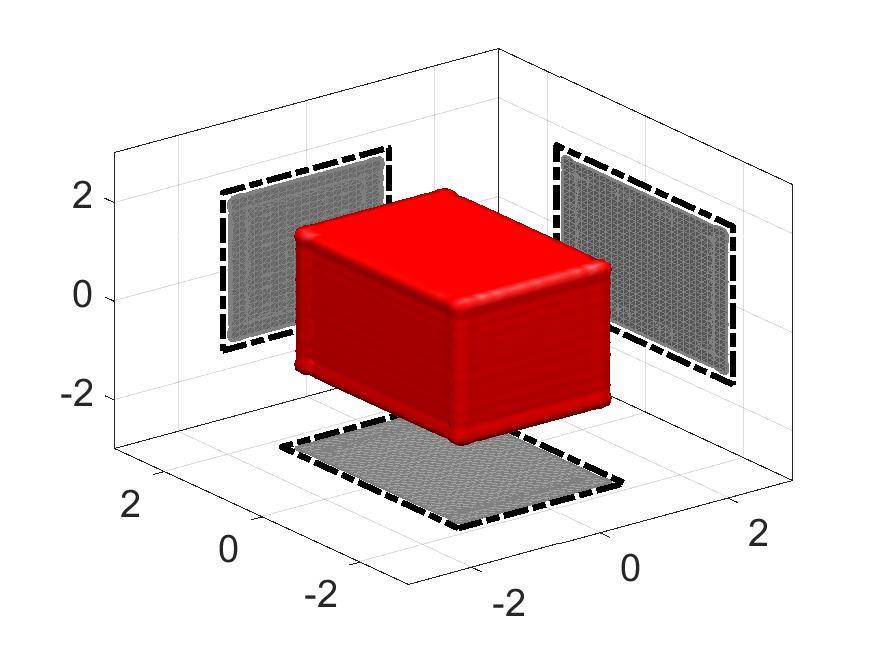}}
	\subfigure[]{\includegraphics[width=0.45\textwidth]{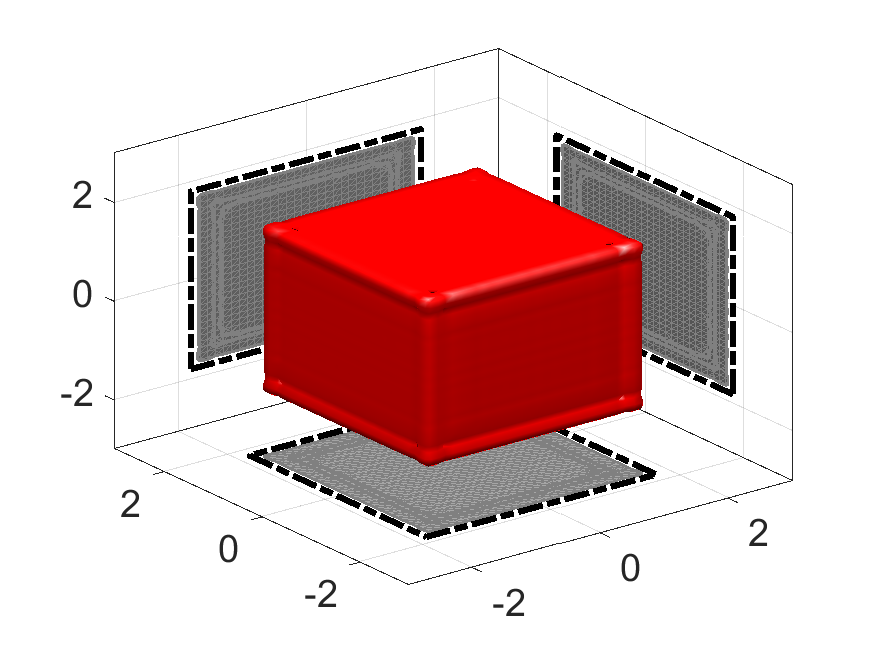}}
	\caption{\label{fig:rectangular} Isosurface plots of rectangular solid with different height, width and length, where the isosurface value is $1$. The sets of height, width and length are as follows: (a) $(1.9, 1.2, 1.4)$, (b) $( 1.2, 1.5,1.8)$, (c) $(1.6, 1.8, 1.3)$, (d) $(1.8, 1.8, 1.8)$. }
\end{figure}

\subsection{Body Experiments}In last example, we consider a challenging case and verify the proposed method by using a set of synthetic experiments on 3D human body shape.
The training dataset consists of $25$ bodies which are generated by the MakeHuman 1.1.1 soft. This experiments consider two characteristic values, i.e., height and relative weight. Define the exact weight by $EW$ and standard weight  by $SW$, then the
relative weight $RW$ is calculated by
\begin{equation*}
RW=\frac{EW}{SW}\times 100\%.
\end{equation*}
Here the height of the body is given by $1.5m, 1.6m, 1.7m, 1.8m, 1.9m$ and the relative weight of the body is given by $60\%, 80\%, 100\%, 120\%, 140\%$. Some human body shapes in the training dataset are presented in Figure \ref{fig:training set}.
In addition, we choose two characteristic values of human body which are not covered by the training data. The first body's height is $1.55m$ and the relative weight is $130\%$. The second body's height is $1.85m$ and the relative weight is $110\%$.
Figure \ref{fig:human1}(a) and Figure \ref{fig:human2}(a)  present the exact body shape with the given characteristic value.
Figure \ref{fig:human1}(b) and Figure \ref{fig:human2}(b) show the prediction of the human body shape with the given characteristic value. The results show that our method is efficient to predict the
human body shape.

\begin{figure}
	\subfigure[]{\includegraphics[width=0.32\textwidth]{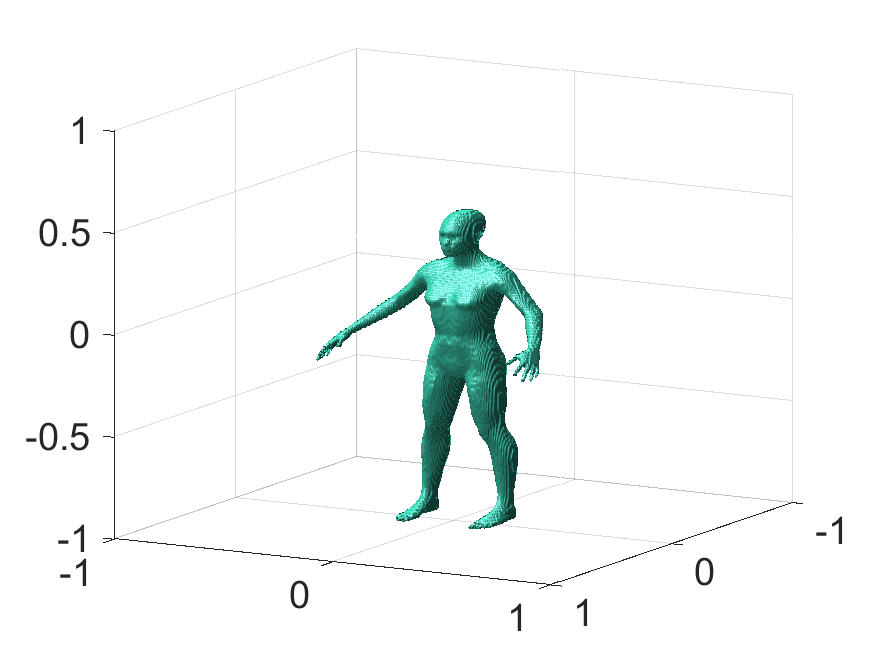}}
	\subfigure[]{\includegraphics[width=0.32\textwidth]{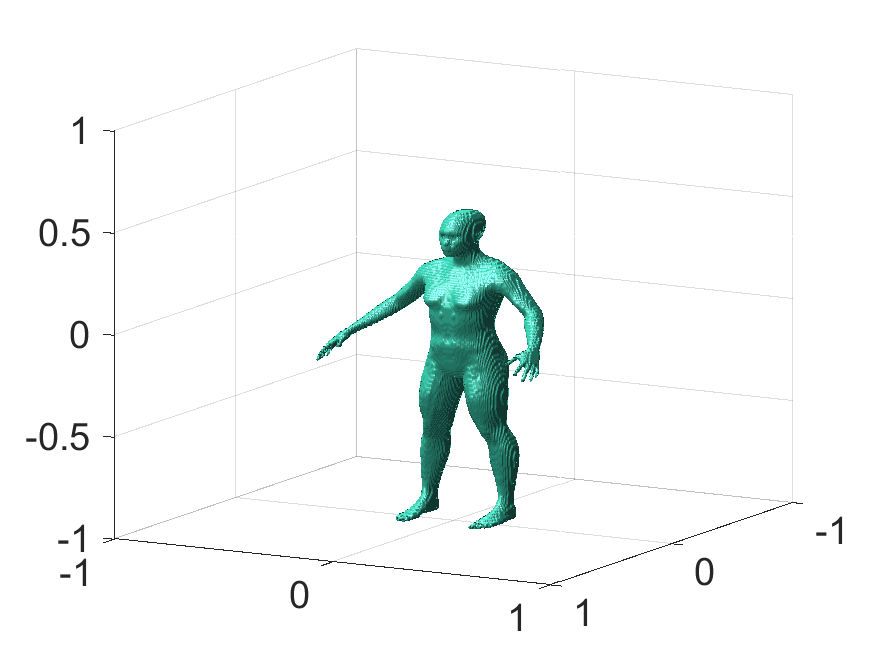}}
	\subfigure[]{\includegraphics[width=0.32\textwidth]{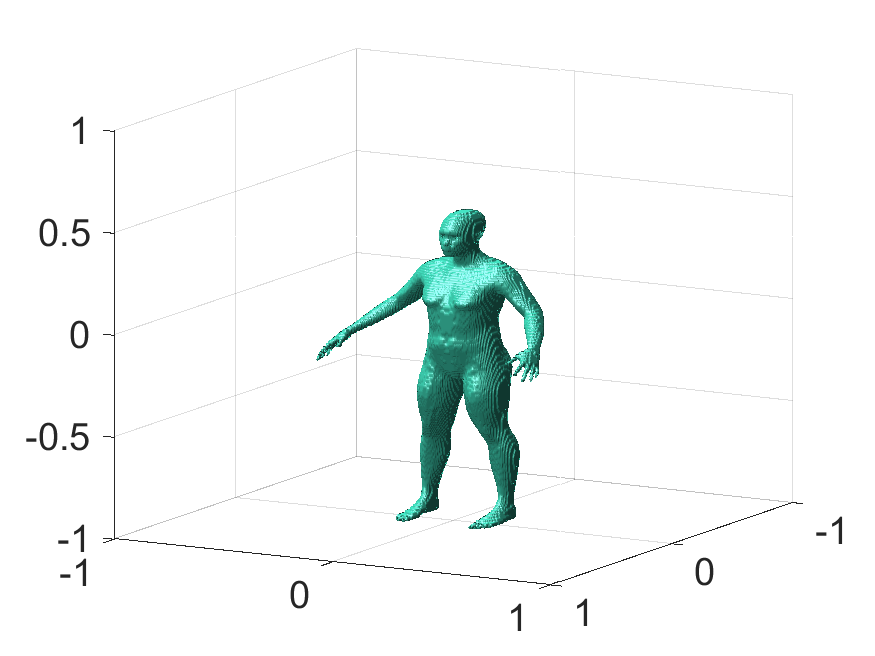}}\\
	\subfigure[]{\includegraphics[width=0.32\textwidth]{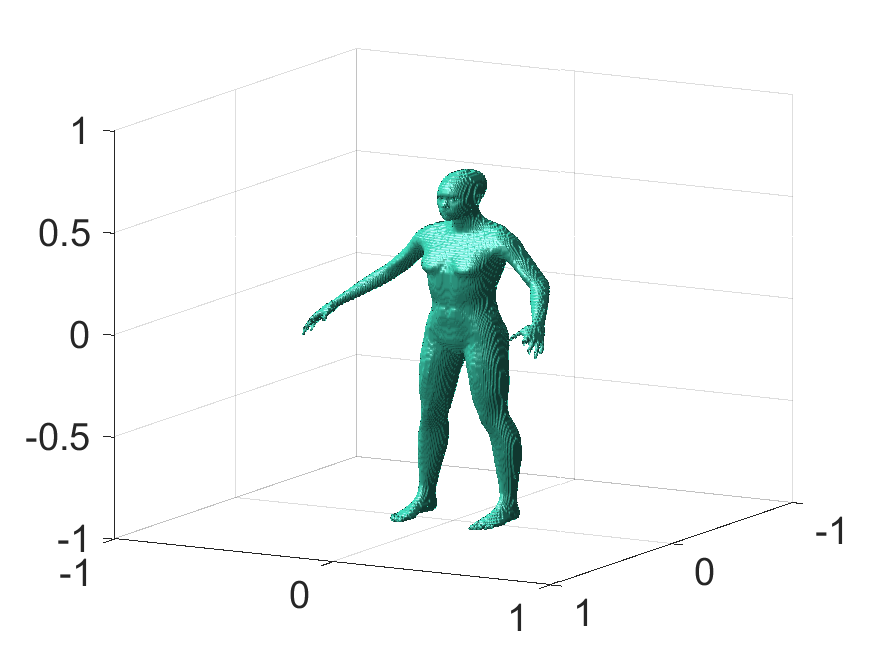}}
	\subfigure[]{\includegraphics[width=0.32\textwidth]{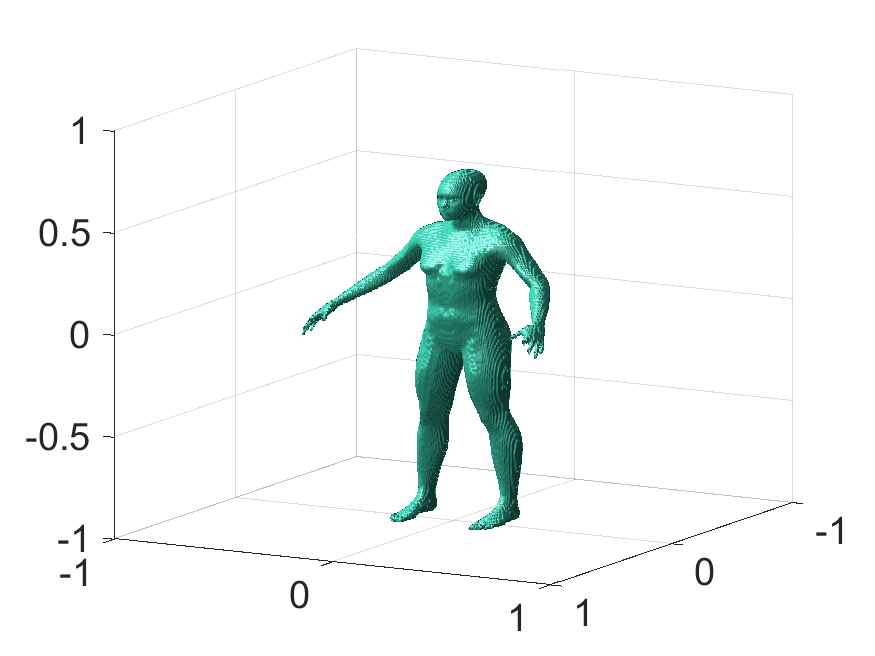}}
	\subfigure[]{\includegraphics[width=0.32\textwidth]{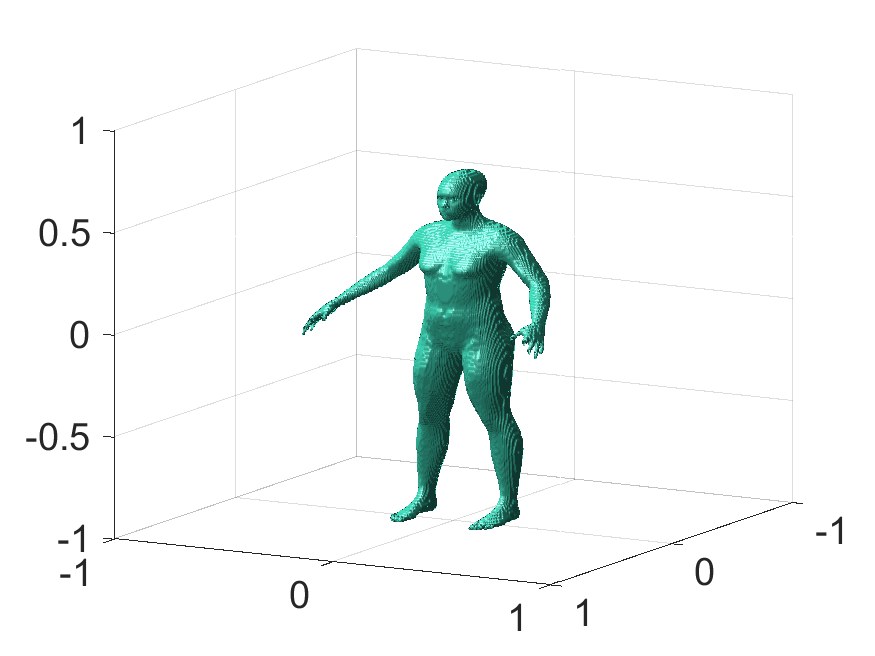}}\\
	\subfigure[]{\includegraphics[width=0.32\textwidth]{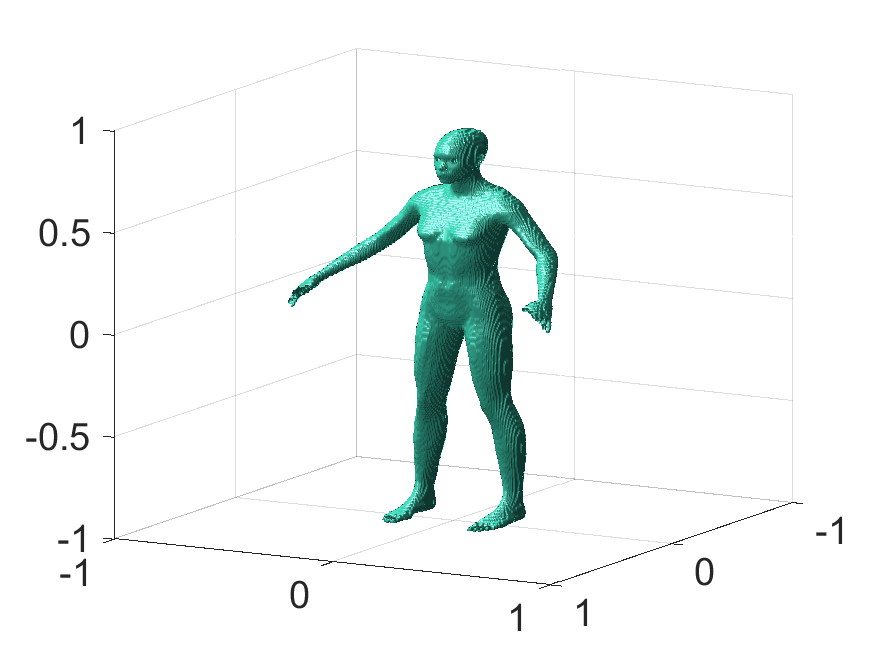}}
	\subfigure[]{\includegraphics[width=0.32\textwidth]{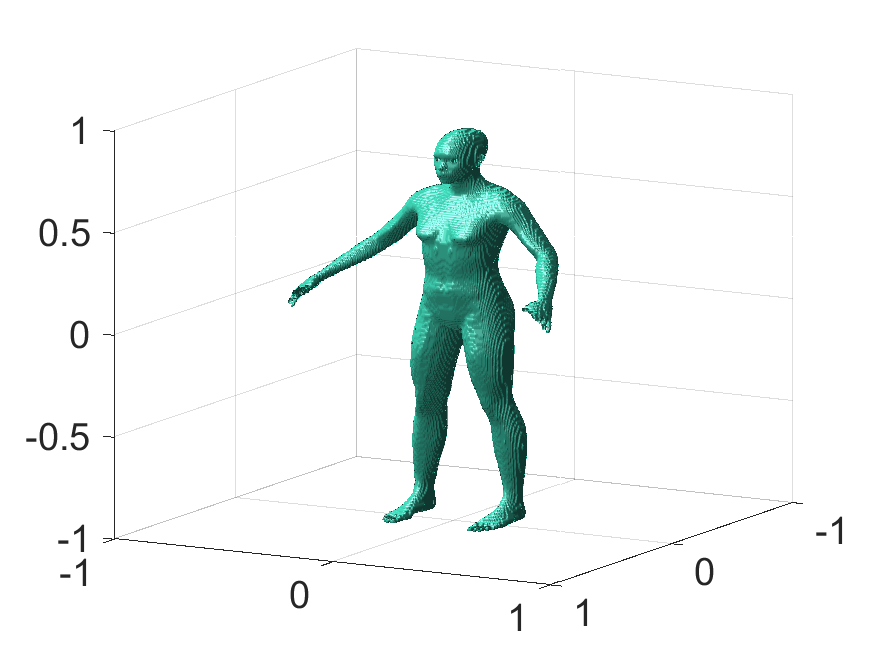}}
	\subfigure[]{\includegraphics[width=0.32\textwidth]{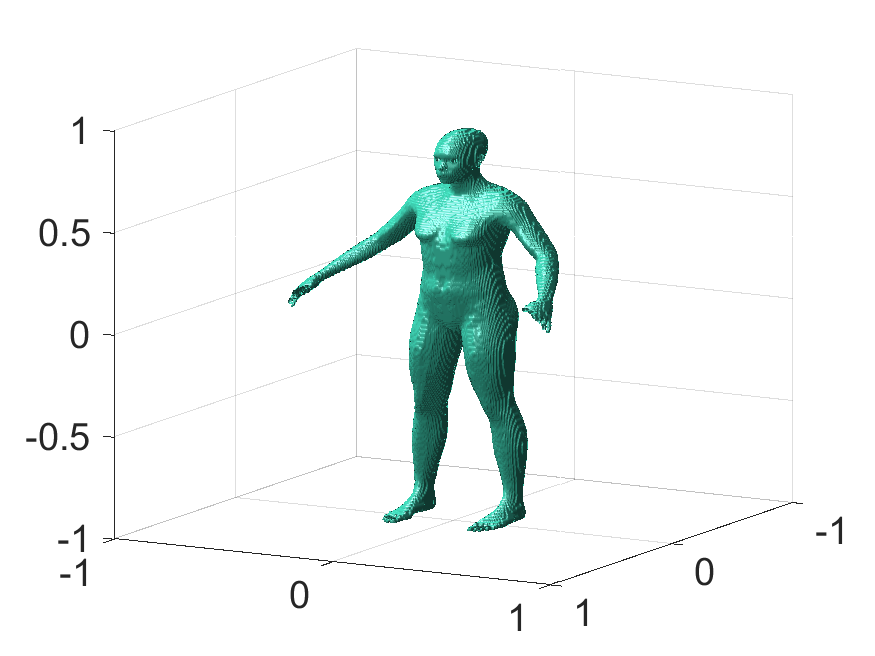}}\\
	\caption{\label{fig:training set} Isosurface plots of some training body data. Relative weight: the left column is $60\%$;
		the center column is $100\%$, the right column is $140\%$; height: the top row is $1.50m$, the center row is $1.70m$, the bottom row is $1.90m$. }
\end{figure}

\begin{figure}
	\subfigure[]{\includegraphics[width=0.49\textwidth]{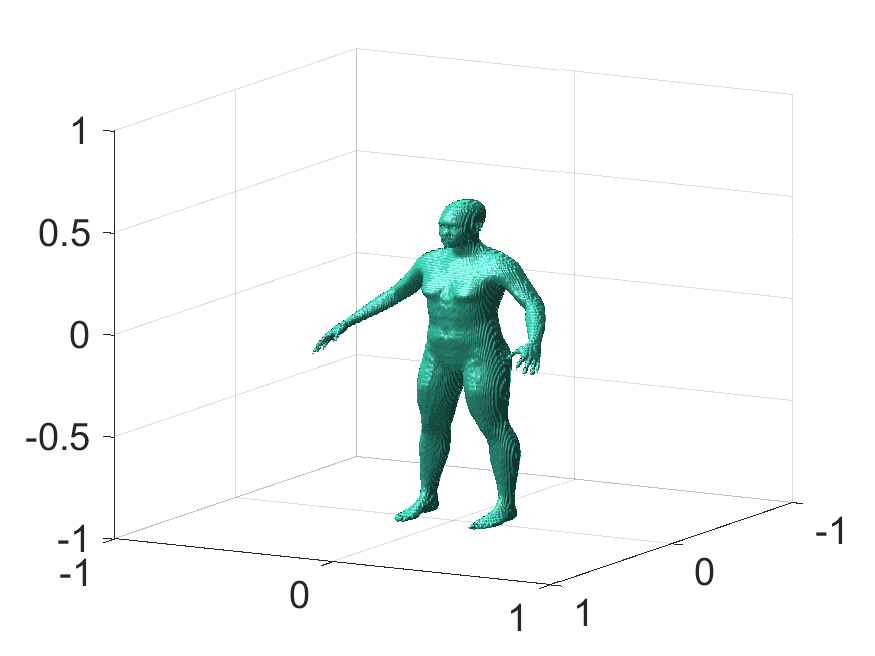}}
	\subfigure[]{\includegraphics[width=0.49\textwidth]{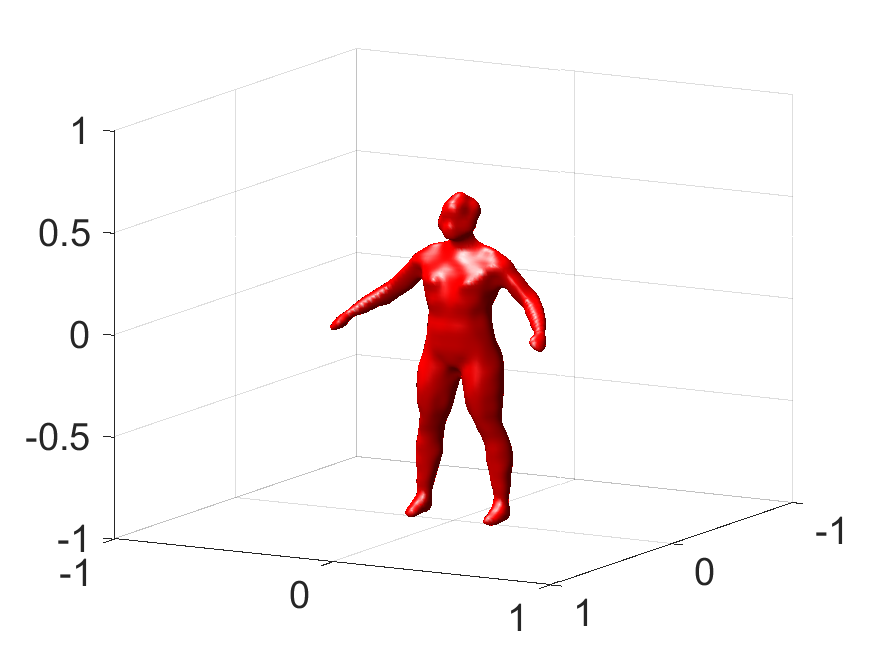}}
	\caption{\label{fig:human1}The relative weight is $130\%$, the height is $1.55 m$. (a) Exact body, (b) reconstruction body.}
\end{figure}

\begin{figure}
	\subfigure[]{\includegraphics[width=0.49\textwidth]{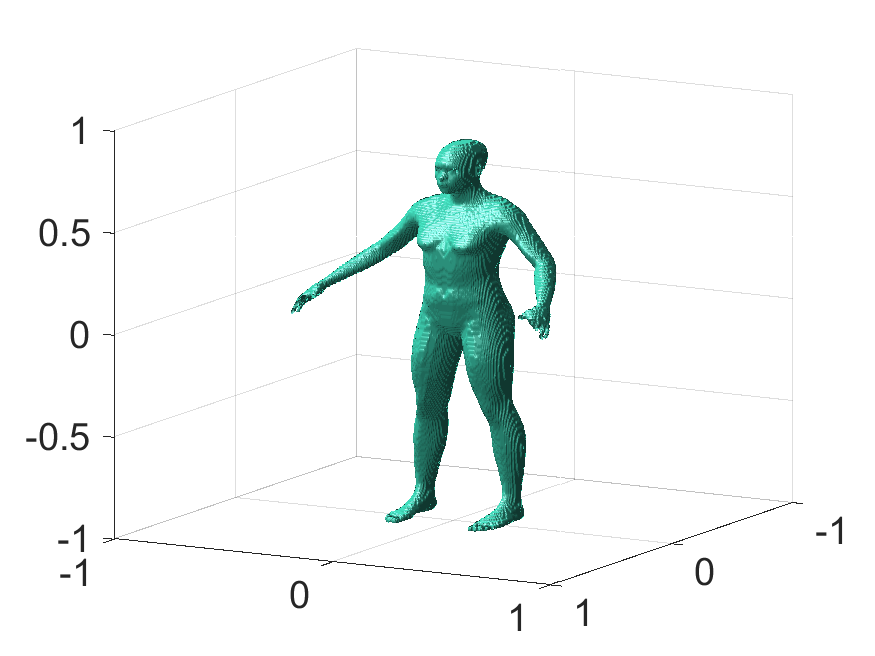}}
	\subfigure[]{\includegraphics[width=0.49\textwidth]{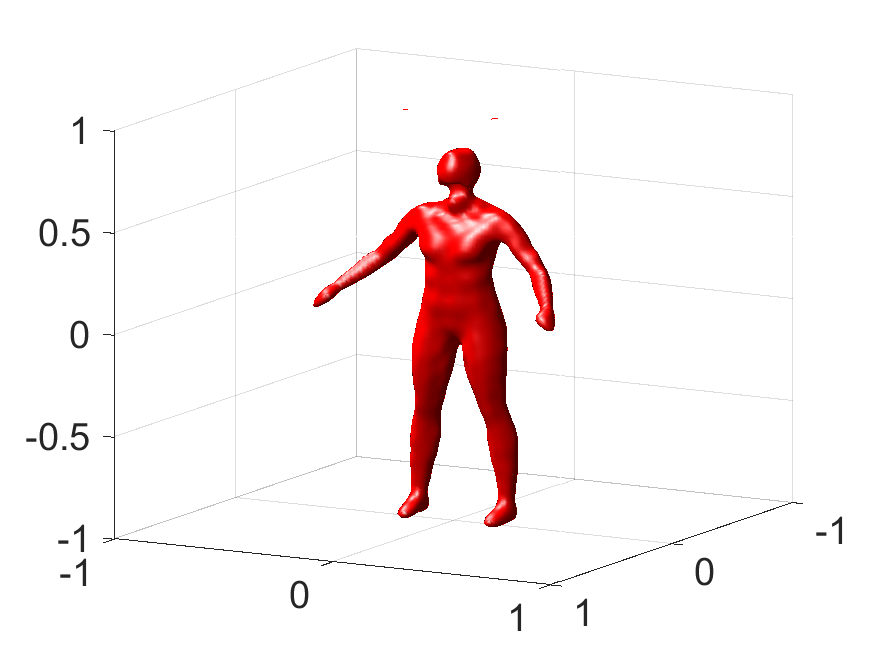}}
	\caption{\label{fig:human2} The relative weight is $110\%$, the height is $1.85 m$. (a) Exact body, (b) reconstruction body.}
\end{figure}

\section{Concluding remarks}

In this paper, we develop a machine-learning method in generating a geometric body shape through prescribing a set of characteristic values of the body. The generation is mainly based on a given training dataset consisting of certain pre-selected body shapes with statistically well-sampled characteristic values. A major novelty and critical ingredient of our study is the borrowing of inverse scattering techniques in the theory of wave propagation to the geometric shape generation. We introduce the notion of shape generator which establishes a one-to-one correspondence between the geometric shape space and the function space consisting of the multiple-frequency far-field patterns associated with the time-harmonic source scattering problem. The shape generator plays an intermediate role in the geometric shape generation. First, the training dataset of geometric shapes is converted into a subset of the function space consisting of the corresponding shape generators. Then a learning model is derived through a functional interpolation of the aforementioned shape generators. For a given set of characteristic values, one then uses the learning model to obtain the shape generator of the underlying geometric body and finally reconstructs it through a multiple-frequency Fourier method. 

To our best knowledge, the present study is the first attempt to introduce inverse scattering approaches in combination with machine learning to the geometric body generation and it opens up many opportunities for further developments. For example, in the current article, the shape generator is introduced through an inverse source scattering model where we make use of the one-to-one correspondence between a geometric shape and the multiple-frequency far-field pattern associated with a compactly-supported acoustic source. One may consider to introduce the shape generator through other inverse scattering models, e.g. the inverse acoustic obstacle scattering model (cf. \cite{LLZ}) or the inverse electromagnetic scattering model (cf. \cite{LLW}). In doing so, one may achieve other geometric shape generation schemes that are suitable for different applications.

\section*{Acknowledgment}
 The work of H. Liu was supported by the FRG and startup grants from Hong Kong Baptist University, Hong Kong RGC General Research Funds, 12302415 and 12302017.

\end{document}